\newcommand{\beqa}{\begin{eqnarray*}}
\newcommand{\eeqa}{\end{eqnarray*}}
\newcommand{\beqn}{\begin{eqnarray}}
\newcommand{\eeqn}{\end{eqnarray}}
\newcommand{\R}{\mathbb R}
\newcommand{\Ha}{\mathbb H}
\newcommand{\D}{\mathbb D}
\newcommand{\mcA}{\mathcal A}
\newcommand{\f}{\frac}
\newcommand{\tf}{\tfrac}
\newcommand{\al}{\alpha}
\newcommand{\de}{\delta}
\newcommand{\la}{\lambda}
\newcommand{\om}{ \omega}
\newcommand{\Om}{ \Omega}
\newcounter{cnt1}
\newcounter{cnt2}
\newcounter{cnt3}
\newcommand{\blr}{\begin{list}{$($\roman{cnt1}$)$}
 {\usecounter{cnt1} \setlength{\topsep}{0pt}
 \setlength{\itemsep}{0pt}}}
\newcommand{\bla}{\begin{list}{$($\alph{cnt2}$)$}
 {\usecounter{cnt2} \setlength{\topsep}{0pt}
 \setlength{\itemsep}{0pt}}}
\newcommand{\bln}{\begin{list}{$($\arabic{cnt3}$)$}
 {\usecounter{cnt3} \setlength{\topsep}{0pt}
 \setlength{\itemsep}{0pt}}}
\newcommand{\el}{\end{list}}
\newtheorem{thm}{Theorem}
\newtheorem{lem}[thm]{Lemma}
\newtheorem{cor}[thm]{Corollary}
\newtheorem{Def}[thm]{Definition}
\newtheorem{rem}[thm]{Remark}
\newcommand{\Rem}{\begin{rem} \rm}
\newcommand{\bdfn}{\begin{Def} \rm}
\newcommand{\edfn}{\end{Def}}
\newcommand{\ba}{\begin{array}}
\newcommand{\ea}{\end{array}}
\date{}
\begin{document}
\title{\bf A sufficiency class for global (in time) solutions to the 
3D NavierÐStokes equations II}
\author[Gill]{T. L. Gill}
\address[Tepper L. Gill]{ Department of Electrical Engineering, Howard University\\
Washington DC 20059 \\ USA, {\it E-mail~:} {\tt tgill@howard.edu}}
\author[Zachary]{W. W. Zachary}
\address[Woodford W. Zachary]{ Department of Electrical Engineering \\ Howard
University\\ Washington DC 20059 \\ USA, {\it E-mail~:} {\tt
wwzachary@earthlink.net}}
\date{}
\subjclass{Primary (35Q30) Secondary(47H20), (76DO3) }
\keywords{Global (in time), 3D-Navier-Stokes Equations}
\begin{abstract} In this paper, we simplify and extend the results of \cite{GZ} to include the case in which $\Om =\R^3$.  Let ${[L^2({\mathbb{R}}^3)]^3 }$ be the Hilbert space of square integrable functions on ${\mathbb {R}}^3 $ and let ${\mathbb H}[{\mathbb{R}}^3]^3 =: {\mathbb H}$ be the completion of the set, $\left\{ {{\bf{u}} \in (\mathbb {C}_0^\infty  [ \R^3 ])^3 \left. {} \right|\,\nabla  \cdot {\bf{u}} = 0} \right\}$, with respect to the inner product of ${[L^2({\mathbb{R}}^3)]^3} $. In this paper, we consider sufficiency conditions on a class of functions in ${\mathbb H}$  which allow global-in-time strong solutions to the three-dimensional Navier-Stokes equations on  ${\mathbb {R}}^3 $. These equations describe the time evolution of the fluid velocity and pressure of an incompressible viscous homogeneous Newtonian fluid in terms of a given initial velocity and given external body forces.  Our approach uses the analytic nature of the Stokes semigroup to construct an equivalent norm for $\mathbb{H}$ which allows us to prove a reverse of the Poincar\'{e} inequality.  This result allows us to  provide strong bounds on the nonlinear term. We then prove that, under appropriate conditions, there exists a positive constant $ {{{u}}_ +}$, depending only on the domain, the viscosity and the body forces such that, for all functions in a dense set $\mathbb{D}$ contained in the closed ball ${{\mathbb B} ( {\mathbb {R}}^3 )}=: {\mathbb B}$ of radius $ (1/2){{u}_ +} $ in ${\mathbb {H}}$, the Navier-Stokes equations have unique strong solutions in ${\mathbb C}^{1} \left( {(0,\infty ),{\mathbb {H}}} \right)$.
\end{abstract}
\maketitle
\section*{Introduction} 
Let ${[L^2({\mathbb{R}}^3)]^3}$ be the Hilbert space of square integrable functions on ${\mathbb {R}}^3$ and let ${\mathbb {H}}_0[ {\mathbb {R}}^3 ]$ be the completion of the set of functions in $\left\{ {{\bf{u}} \in \mathbb {C}_0^\infty  [ {\mathbb {R}}^3 ]^3 \left. {} \right|\,\nabla  \cdot {\bf{u}} = 0} \right\}$ which vanish at infinity with respect to the inner product of ${[L^2({\mathbb{R}}^3)]^3 }$, and let ${\mathbb{V}}_0[ {\mathbb {R}}^3 ]$ be the completion of the above functions which vanish at infinity with respect to the inner product of $\mathbb{H}_0^1[ {\mathbb {R}}^3 ]$, the functions in ${\mathbb{H}}_0 [ {\mathbb {R}}^3 ]$ with weak derivatives in ${[L^2({\mathbb{R}}^3)]^3 }$.  The global-in-time classical Navier-Stokes initial-value problem (on $ \mathbb{R}^3 {\text{ and all }}T > 0$) is to find  functions ${\mathbf{u}}:[0,T] \times {\mathbb {R}}^3  \to \mathbb{R}^3$ and $p:[0,T] \times {\mathbb {R}}^3  \to \mathbb{R}$ such that
\beqn
\begin{gathered}
  \partial _t {\mathbf{u}} + ({\mathbf{u}} \cdot \nabla ){\mathbf{u}} - \nu \Delta {\mathbf{u}} + \nabla p = {\mathbf{f}}(t){\text{ in (}}0,T) \times {\mathbb {R}}^3 , \hfill \\
  {\text{                              }}\nabla  \cdot {\mathbf{u}} = 0{\text{ in (}}0,T) \times {\mathbb {R}}^3 {\text{ (in the weak sense),}} \hfill \\
  {\text{                              }}\mathop {\lim }\limits_{\left\| {\mathbf{x}} \right\| \to \infty } {\mathbf{u}}(t,{\mathbf{x}}) = 0{\text{ on }}\left( {0,T} \right) \times \mathbb{R}^3, \hfill \\
  {\text{                              }}{\mathbf{u}}(0,{\mathbf{x}}) = {\mathbf{u}}_0 ({\mathbf{x}}){\text{ in }}{\mathbb {R}}^3 . \hfill \\ 
\end{gathered} 
\eeqn
The equations describe the time evolution of the fluid velocity ${\mathbf{u}}({\mathbf{x}},t)$ and the pressure $p$ of an incompressible viscous homogeneous Newtonian fluid with constant viscosity coefficient $\nu $ in terms of a given initial velocity ${\mathbf{u}}_0 ({\mathbf{x}})$ and given external body forces
${\mathbf{f}}({\mathbf{x}},t)$.  (Note that our third condition, $\mathop {\lim }\limits_{\left\| {\mathbf{x}} \right\| \to \infty } {\mathbf{u}}(t,{\mathbf{x}}) = 0{\text{ on }}\left( {0,T} \right) \times \mathbb{R}^3$, is natural in this case since it is well-known that $\mathbb{H}_0^k [ {\mathbb {R}}^3 ]^3=\mathbb{H}^k [ {\mathbb {R}}^3 ]^3$ (see Stein \cite{S} or \cite{SY}.) 
\section*{Purpose}
Let $\mathbb{P}$ be the (Leray) orthogonal projection of 
$(L^2 [ {\mathbb {R}}^3 ])^3$ 
onto ${{\mathbb{H}}_0}[ {\mathbb {R}}^3]$ and define the Stokes operator by:  $ {\bf{Au}} = : -\mathbb{P} \Delta {\bf{u}}$, 
for ${\bf{u}} \in D({\bf{A}}) \subset {\mathbb{H}}_0^{2}[ {\mathbb {R}}^3]$, the domain of ${\bf{A}}$.      The purpose of this paper is to prove that there exists a number $ {{{u}}_ +} $, depending only on ${\bf{A}}$,  $f$ and  $\nu $ such that, for all functions in a certain subset (defined in the paper) of
$
\mathbb{D} = D({\bf{A}}) \cap \mathbb{B},
$
 where ${{\mathbb B}}$ is the closed ball of radius 
$ \tfrac{1}{2}{{u}}_ + $ in ${{\mathbb H}_0( {\mathbb {R}}^3 )}$, the Navier-Stokes equations have unique strong solutions in 
$
{\bf{u}}\in L_{\text{loc}}^\infty[[0,\infty); {\mathbb {V}}_0( {\mathbb {R}}^3)]
\cap \mathbb{C}^1[(0,\infty);{\mathbb {H}}_0( {\mathbb {R}}^3 )].$
\section*{Preliminaries}
Applying the Leray projection to equation (1), with 
${{C}}({\mathbf{u}},{\mathbf{u}}) = \mathbb{P}({\mathbf{u}} \cdot \nabla ){\mathbf{u}}$, we can recast equation (1) in the standard form:
\beqn
\begin{gathered}
  \partial _t {\mathbf{u}} =  - \nu {\mathbf{Au}} - {{C}}({\mathbf{u}},{\mathbf{u}}) + \mathbb{P}{\mathbf{f}}(t){\text{ in (}}0,T) \times \R^3 , \hfill \\
  {\text{                              }}\mathop {\lim }\limits_{\left\| {\mathbf{x}} \right\| \to \infty } {\mathbf{u}}(t,{\mathbf{x}}) = 0{\text{ on }}\left( {0,T} \right) \times \mathbb{R}^3, \hfill \\
  {\text{                              }}{\mathbf{u}}(0,{\mathbf{x}}) = {\mathbf{u}}_0 ({\mathbf{x}}){\text{ in }}\R^3, \hfill \\ 
\end{gathered} 
\eeqn
where we have used the fact that the orthogonal complement of ${\Ha}_0 $ relative to $\{{L}^{2}(\R^3)\}^3 $ is $\{ {\mathbf{v}}\,:\;{\mathbf{v}} = \nabla q,\;q \in (H^1)^3 \} $ to eliminate the pressure term (see Galdi [GA] or [SY, T1,T2]). 
\begin{Def}  We say that the operator ${\mathcal{A}}( \cdot ,t)$ is (for each $t$) 
\begin{enumerate}
\item
0-Dissipative if $
\left\langle {{\mathcal{A}}({\mathbf{u}},t),{\mathbf{u}}} \right\rangle _{\mathbb{H}}  \le 0$.
\item
Dissipative if 
$\left\langle {{\mathcal{A}}({\mathbf{u}},t) - {\mathcal{A}}({\mathbf{v}},t),{\mathbf{u}} - {\mathbf{v}}} \right\rangle _{\mathbb{H}}  \le 0$.
\item
Strongly dissipative if there  exists an $ \de > 0$ such that
$$
\left\langle {{\mathcal{A}}({\mathbf{u}},t) - {\mathcal{A}}({\mathbf{v}},t),{\mathbf{u}} - {\mathbf{v}}} \right\rangle _{\mathbb{H}}  \le  - \de \left\| {{\mathbf{u}} - {\mathbf{v}}} \right\|_{\mathbb{H}}^2. 
$$
\end{enumerate}
\end{Def}

Note that, if ${\mathcal{A}}( \cdot ,t)$ is a linear operator, definitions (1) and (2) coincide.  Theorem 2 below is essentially due to Browder \cite{B},  see Zeidler {\cite[Corollary 32.27, page 868 and Corollary 32.35 page 887, in Vol. IIB]{Z}}, while Theorem 3 is from Miyadera \cite[p. 185, Theorem 6.20]{M}, and is a modification of the Crandall-Liggett Theorem \cite{CL} (see the appendix to the first section of \cite{CL}) . 
\begin{thm} Let $\mathbb{B}$ be a closed, bounded, convex subset of $
\mathbb{H}$.  If ${\mathcal{A}}( \cdot ,t):\mathbb{B} \to \mathbb{H}$ is  a strongly dissipative mapping for each fixed $t \ge 0$, then for each ${\mathbf{b}} \in \mathbb{B}$, there is a 
${\mathbf{u}} \in \mathbb{B}$ with ${\mathcal{A}}({\mathbf{u}},t) = {\mathbf{b}}$ (i.e., the range, $
Ran{\text{[}}{\mathcal{A}}( \cdot ,t)] \supset \mathbb{B}$).
\end{thm}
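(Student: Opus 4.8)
The plan is to recast the statement in the language of monotone operators and to produce the solution by a projected resolvent (fixed-point) construction, which is exactly the route behind the cited results of Browder and Zeidler \cite{Z}. Put $B := -\mathcal{A}(\cdot,t)$. The strong dissipativity hypothesis is precisely the statement that $B$ is strongly monotone on $\mathbb{B}$ with the same constant $\delta$, i.e. $\langle B\mathbf{u}-B\mathbf{v},\mathbf{u}-\mathbf{v}\rangle_{\mathbb{H}}\ge \delta\|\mathbf{u}-\mathbf{v}\|_{\mathbb{H}}^{2}$, and solving $\mathcal{A}(\mathbf{u},t)=\mathbf{b}$ is the same as solving $B\mathbf{u}=-\mathbf{b}$. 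Throughout I use the standing regularity that $\mathcal{A}(\cdot,t)$ is Lipschitz continuous on the bounded set $\mathbb{B}$, which is what the Browder/Zeidler machinery requires and which holds for the operators to which the theorem is applied; let $L$ denote a Lipschitz constant for $B$ on $\mathbb{B}$.

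First I would introduce the metric projection $P_{\mathbb{B}}:\mathbb{H}\to\mathbb{B}$, which is well defined and nonexpansive because $\mathbb{B}$ is nonempty, closed and convex. Fix $\mathbf{b}\in\mathbb{B}$ and, for a parameter $\lambda>0$, define the self-map $\Phi_\lambda(\mathbf{u}):=P_{\mathbb{B}}\bigl(\mathbf{u}-\lambda(B\mathbf{u}+\mathbf{b})\bigr)$ of $\mathbb{B}$. Expanding $\|(\mathbf{u}-\mathbf{v})-\lambda(B\mathbf{u}-B\mathbf{v})\|_{\mathbb{H}}^{2}$ and using strong monotonicity together with the Lipschitz bound yields, since the constant $\mathbf{b}$ cancels and $P_{\mathbb{B}}$ is nonexpansive, the estimate $\|\Phi_\lambda(\mathbf{u})-\Phi_\lambda(\mathbf{v})\|_{\mathbb{H}}^{2}\le(1-2\lambda\delta+\lambda^{2}L^{2})\|\mathbf{u}-\mathbf{v}\|_{\mathbb{H}}^{2}$. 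Choosing any $\lambda\in(0,2\delta/L^{2})$ makes $\Phi_\lambda$ a strict contraction, so the Banach fixed point theorem gives a unique $\mathbf{u}^{\ast}\in\mathbb{B}$ with $\Phi_\lambda(\mathbf{u}^{\ast})=\mathbf{u}^{\ast}$. By the variational characterisation of the projection, this fixed point is equivalent to the variational inequality $\langle \mathbf{b}-\mathcal{A}(\mathbf{u}^{\ast},t),\,\mathbf{v}-\mathbf{u}^{\ast}\rangle_{\mathbb{H}}\ge 0$ for every $\mathbf{v}\in\mathbb{B}$.

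The main work, and the step I expect to be the genuine obstacle, is upgrading this variational inequality to the equation $\mathcal{A}(\mathbf{u}^{\ast},t)=\mathbf{b}$. Writing $\mathbf{r}:=\mathbf{b}-\mathcal{A}(\mathbf{u}^{\ast},t)$, the inequality says exactly that $\mathbf{r}$ lies in the inward normal cone of $\mathbb{B}$ at $\mathbf{u}^{\ast}$, so the residual vanishes automatically as soon as $\mathbf{u}^{\ast}$ is an interior point, where that cone is trivial. The task is therefore to exclude a spurious boundary solution, and this is exactly where strong (as opposed to merely $0$-) dissipativity and the hypothesis $\mathbf{b}\in\mathbb{B}$ must enter. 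Testing the variational inequality with the admissible choice $\mathbf{v}=\mathbf{b}$ gives $\langle \mathcal{A}(\mathbf{u}^{\ast},t)-\mathbf{b},\,\mathbf{u}^{\ast}-\mathbf{b}\rangle_{\mathbb{H}}\ge 0$, and subtracting from it the strong dissipativity inequality for the pair $(\mathbf{u}^{\ast},\mathbf{b})$ produces, after Cauchy--Schwarz, the a priori bound $\|\mathbf{u}^{\ast}-\mathbf{b}\|_{\mathbb{H}}\le \delta^{-1}\,\|\mathcal{A}(\mathbf{b},t)-\mathbf{b}\|_{\mathbb{H}}$, which confines the fixed point to a controlled neighbourhood of $\mathbf{b}$ and is the quantitative lever against escape to the bad part of $\partial\mathbb{B}$.

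Finally I would close the argument by forcing $\mathbf{r}=\mathbf{0}$: the inward-normal property of $\mathbf{r}$ together with the above confinement should show that the projection is inactive at the fixed point, i.e. that $\mathbf{u}^{\ast}$ lies where all directions are admissible, whence the variational inequality collapses to $\mathcal{A}(\mathbf{u}^{\ast},t)=\mathbf{b}$; uniqueness of the contraction fixed point then identifies $\mathbf{u}^{\ast}$ as the desired solution, so $\mathbf{b}\in\mathrm{Ran}[\mathcal{A}(\cdot,t)]$, and $\mathbf{b}\in\mathbb{B}$ being arbitrary gives $\mathrm{Ran}[\mathcal{A}(\cdot,t)]\supset\mathbb{B}$. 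I want to be candid that this inactivity of the projection is the delicate point of the whole proof, since it is a genuinely geometric statement about how $\mathcal{A}$ behaves near $\partial\mathbb{B}$ rather than a formal consequence of dissipativity alone. An alternative route I would present in parallel is to extend $B=-\mathcal{A}(\cdot,t)$ to a maximal strongly monotone operator on all of $\mathbb{H}$ and invoke the surjectivity of strongly monotone Lipschitz operators (Zeidler \cite{Z}, Cor.\ 32.35) to solve $B\mathbf{u}=-\mathbf{b}$ globally; this reduces the theorem to the same confinement question, namely verifying that the unique global solution already sits inside $\mathbb{B}$.
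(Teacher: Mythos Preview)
The paper does not actually prove this theorem: it is stated with an attribution to Browder \cite{B} and to Zeidler \cite[Corollaries 32.27 and 32.35]{Z}, and no argument is supplied. So there is no in-paper proof to compare against; your proposal is effectively an attempted reconstruction of the cited literature. The monotone-operator reformulation, the projected-resolvent map $\Phi_\lambda$, the contraction estimate under a Lipschitz hypothesis, and the passage to the variational inequality are all carried out correctly and are indeed the standard Browder/Zeidler machinery.

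The gap you candidly flag---upgrading the variational inequality on $\mathbb{B}$ to the equation $\mathcal{A}(\mathbf{u}^\ast,t)=\mathbf{b}$---is not just delicate; under the hypotheses as literally written it cannot be closed, because the statement is false in that generality. Take $\mathbb{H}=\mathbb{R}$, $\mathbb{B}=[-1,1]$, and $\mathcal{A}(u)=-u+10$: this map is strongly dissipative with $\delta=1$, yet $\mathcal{A}(\mathbb{B})=[9,11]$, which misses $\mathbb{B}$ entirely. Your diagnosis that the missing ingredient is ``a genuinely geometric statement about how $\mathcal{A}$ behaves near $\partial\mathbb{B}$ rather than a formal consequence of dissipativity alone'' is exactly right. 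Some extra hypothesis is required---either a boundary condition forcing the residual to point outward on $\partial\mathbb{B}$, or that $\mathcal{A}(\cdot,t)$ is strongly dissipative and Lipschitz on all of $\mathbb{H}$ (so that Zeidler's Corollary 32.35 gives global surjectivity, after which one must still argue the preimage lands in $\mathbb{B}$), or that one is really solving $(I-\beta\mathcal{A})\mathbf{u}=\mathbf{b}$ rather than $\mathcal{A}\mathbf{u}=\mathbf{b}$. The last of these is what the paper actually uses downstream: Theorem~10 invokes only $\mathrm{Ran}[I-\beta\mathcal{A}(\cdot,t)]\supset\mathbb{D}$, and that \emph{does} follow from strong dissipativity plus maximality. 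So the stated Theorem~2 appears to be an imprecise paraphrase of the cited results, and your inability to finish the proof reflects a defect in the statement rather than in your approach.
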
	
\begin{thm} Let ${\mathcal{A}( \cdot ,t)}, t \in I = [0,\infty ){\text{\} }}$ be a family of operators defined on $\mathbb{H}$ with domains $
D(\mathcal{A}( \cdot ,t)) = D$, independent of $t$. We assume that $\mathbb{D} = D(A) \cap \mathbb{B}$ is a closed convex set (in an appropriate topology):
\begin{enumerate}
\item
The operator $\mathcal{A}( \cdot ,t)$ is the generator of a contraction semigroup for each
$t \in I$.
\item
The function $\mathcal{A}({\mathbf{u}}, t)$ is continuous in both variables on $
 \mathbb{D} \times I $.
\end{enumerate}
Then, for every ${\mathbf{u}}_0  \in \mathbb{D}$, the problem 
$\partial _t {\mathbf{u}}(t,{\mathbf{x}}) = \mathcal{A}({\mathbf{u}}(t,{\mathbf{x}}), t)$, 
${\mathbf{u}}(0,{\mathbf{x}}) = {\mathbf{u}}_0 ({\mathbf{x}})$, has a unique solution 
${\mathbf{u}}(t,{\mathbf{x}}) \in \mathbb{C}^1 (I;\mathbb{D})$.
\end{thm}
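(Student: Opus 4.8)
The plan is to reprove this as a Crandall--Liggett--Kato type generation result, adapted to the nonautonomous setting, via implicit (backward Euler) time discretization. The first step is to unpack hypothesis (1): since $\mathcal{A}(\cdot,t)$ generates a contraction semigroup on the closed convex set $\mathbb{D}$, the nonlinear Hille--Yosida / Komura theory says that $-\mathcal{A}(\cdot,t)$ is maximal monotone, equivalently $\mathcal{A}(\cdot,t)$ is dissipative and $m$-dissipative; hence for every $\lambda>0$ the resolvent $J_\lambda^t := (I-\lambda\mathcal{A}(\cdot,t))^{-1}$ is a single-valued contraction, and because $\mathbb{D}$ is closed, convex and invariant it maps $\mathbb{D}$ into $\mathbb{D}$. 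This resolvent is the engine of the argument.

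Next, fix $T>0$, put $h=T/n$, $t_k=kh$, and define the iterates $\mathbf{u}_0^h=\mathbf{u}_0$, $\mathbf{u}_k^h=J_h^{t_k}\mathbf{u}_{k-1}^h\in\mathbb{D}$, with piecewise-constant (or piecewise-linear) interpolant $\mathbf{u}^h$. The core estimate is the classical comparison argument: using dissipativity to estimate $\|\mathbf{u}_k^h-\mathbf{u}_j^{h'}\|_{\mathbb{H}}$ for two mesh sizes, with the discrepancy produced by evaluating $\mathcal{A}$ at different time nodes controlled through the (uniform, on compacts) continuity of $t\mapsto\mathcal{A}(\mathbf{u},t)$ from hypothesis (2), one shows $\{\mathbf{u}^h\}$ is Cauchy in $C([0,T];\mathbb{H})$. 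The limit $\mathbf{u}\in C([0,T];\mathbb{D})$ (closedness of $\mathbb{D}$) is the candidate solution; running this for all $T$ gives a function on $I$.

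To upgrade $\mathbf{u}$ from a mild/integral solution to a genuine $C^1$ solution I would exploit that $\mathbb{H}$ is reflexive and $\mathbf{u}_0\in D=D(\mathcal{A})$: the discrete difference quotients $(\mathbf{u}_k^h-\mathbf{u}_{k-1}^h)/h=\mathcal{A}(\mathbf{u}_k^h,t_k)$ stay bounded in $\mathbb{H}$ (again by dissipativity together with an a~priori bound on $\|\mathcal{A}(\mathbf{u}_0,0)\|_{\mathbb{H}}$ propagated along the scheme), so in the limit $\mathbf{u}$ is Lipschitz on every $[0,T]$, hence a.e.\ differentiable, $\mathbf{u}(t)\in D$ for all $t$, and $\partial_t\mathbf{u}(t)=\mathcal{A}(\mathbf{u}(t),t)$ for a.e.\ $t$. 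Finally, $t\mapsto\mathcal{A}(\mathbf{u}(t),t)$ is the composition of the continuous map $t\mapsto(\mathbf{u}(t),t)$ with the jointly continuous $\mathcal{A}$ of hypothesis (2), so $\partial_t\mathbf{u}$ has a continuous representative and the relation holds everywhere, i.e.\ $\mathbf{u}\in C^1(I;\mathbb{D})$. Uniqueness is then immediate: for two solutions $\mathbf{u},\mathbf{v}$ one has $\tfrac{d}{dt}\|\mathbf{u}-\mathbf{v}\|_{\mathbb{H}}^2=2\langle\mathcal{A}(\mathbf{u},t)-\mathcal{A}(\mathbf{v},t),\mathbf{u}-\mathbf{v}\rangle_{\mathbb{H}}\le 0$, whence $\|\mathbf{u}(t)-\mathbf{v}(t)\|_{\mathbb{H}}\le\|\mathbf{u}_0-\mathbf{v}_0\|_{\mathbb{H}}=0$.

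I expect the genuine work to lie in two places. The first is verifying, carefully, that the time-dependence of $\mathcal{A}$ does not spoil the Cauchy estimate for the Euler polygons — one needs the modulus of continuity of $t\mapsto\mathcal{A}(\cdot,t)$ to be summable against the accumulated error, which is where joint continuity on $\mathbb{D}\times I$ (rather than mere separate continuity) and boundedness of $\mathbb{B}$ are used. The second, and in my view the main obstacle, is the bootstrap from the almost-everywhere identity $\partial_t\mathbf{u}=\mathcal{A}(\mathbf{u},t)$ and Lipschitz regularity to full $C^1$ regularity using \emph{only} continuity of $\mathcal{A}$, together with confirming that the resolvents — and hence the whole trajectory — never leave $\mathbb{D}$, which is precisely the role of the "closed convex" hypothesis and the generation assumption.
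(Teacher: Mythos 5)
The paper does not prove this theorem at all: it is imported verbatim as a known result, attributed to Miyadera \cite[Theorem~6.20, p.~185]{M} and described as a modification of the Crandall--Liggett theorem \cite{CL}, so there is no in-paper argument to compare yours against. Your backward-Euler discretization with resolvents $J_\lambda^t=(I-\lambda\mathcal{A}(\cdot,t))^{-1}$, the two-mesh Cauchy estimate, and the Lipschitz-to-$C^1$ bootstrap is precisely the standard proof of the cited result, and the outline is sound; the two places you flag as delicate (absorbing the modulus of continuity of $t\mapsto\mathcal{A}(\cdot,t)$ into the Cauchy estimate, and identifying the a.e.\ derivative with $\mathcal{A}(\mathbf{u}(t),t)$, which needs demiclosedness of the maximal dissipative graph in addition to hypothesis (2)) are exactly where the work lies in Miyadera's and Kato's treatments. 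One assertion you state a bit too quickly is that the resolvents leave $\mathbb{D}$ invariant; this is true, but it rests on Brezis's characterization that a closed convex set is invariant under the semigroup generated by a maximal dissipative operator if and only if it is invariant under all its resolvents, and it deserves an explicit citation or proof rather than the phrase ``because $\mathbb{D}$ is closed, convex and invariant.''
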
 
\subsection*{Stokes Equation}
The difficulty in proving the existence of global-in-time strong solutions for equation (2) is directly linked to the problem of getting good estimates for the nonlinear term ${{C}}({\mathbf{u}},{\mathbf{u}})$.  In \cite{GZ}, we obtained an extension of the important result due to Constantin and Foias \cite{CF}).  This result, see below, is one of the major estimates used to study this equation.    In what follows, we assume that ${\bf u},\, {\bf v} \in D({\bf{A}})$.
\begin{thm}   Let ${0 \le \alpha_i, \ 1 \le i \le 3}$, satisfy 
 ${ \alpha_1+\alpha_2+\alpha_3 = 3/2}$ and
\beqa
(\alpha _1 ,\alpha _2 ,\alpha _3 ) \notin \left\{ {(3/2,0,0),(0,3/2,0),(0,0,3/2)} \right\}.
\eeqa
Then there is a positive constant $c=c(\al_i, \mathbb{R}^3)$ such that
\beqn
\left| {\left\langle {{{C}}({\mathbf{u}},{\mathbf{v}}),{\mathbf{w}}} \right\rangle _\mathbb{H} } \right| \le c\left\| {{\mathbf{A}}^{\alpha _1 /2} {\mathbf{u}}} \right\|_\mathbb{H} \left\| {{\mathbf{A}}^{(1 + \alpha _2 )/2} {\mathbf{v}}} \right\|_\mathbb{H} \left\| {{\mathbf{A}}^{\alpha _3 /2} {\mathbf{w}}} \right\|_\mathbb{H}. 
\eeqn
\end{thm}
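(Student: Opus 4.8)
The inequality is the whole-space counterpart of the Constantin--Foias estimate from \cite{CF}, so the plan is to run their H\"older-plus-Sobolev scheme using the Fourier symbol calculus available on $\R^3$ in place of the eigenfunction expansions and the Poincar\'e inequality that serve on bounded domains. First I would discard the Leray projection from the pairing: since $\mathbf{w}\in\mathbb{H}$ is divergence free one has $\mathbb{P}\mathbf{w}=\mathbf{w}$, hence $\langle C(\mathbf{u},\mathbf{v}),\mathbf{w}\rangle_{\mathbb{H}} = \langle(\mathbf{u}\cdot\nabla)\mathbf{v},\mathbf{w}\rangle_{[L^2(\R^3)]^3} = \sum_{i,j}\int_{\R^3} u_i\,(\partial_i v_j)\,w_j\,d\mathbf{x}$, and a pointwise Cauchy--Schwarz bound on the summand gives $\bigl|\langle C(\mathbf{u},\mathbf{v}),\mathbf{w}\rangle_{\mathbb{H}}\bigr| \le \int_{\R^3} |\mathbf{u}|\,|\nabla\mathbf{v}|\,|\mathbf{w}|\,d\mathbf{x}$, where $|\nabla\mathbf{v}| := \bigl(\sum_j|\nabla v_j|^2\bigr)^{1/2}$.

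Next I would apply H\"older's inequality with three exponents $p_1,p_2,p_3\in[2,\infty)$ satisfying $p_1^{-1}+p_2^{-1}+p_3^{-1}=1$ to obtain the bound $\|\mathbf{u}\|_{p_1}\|\nabla\mathbf{v}\|_{p_2}\|\mathbf{w}\|_{p_3}$, and then fix the exponents by $p_i^{-1}=\tfrac12-\tfrac{\alpha_i}{3}$. The structural fact that makes everything close is that on $\R^3$ the Stokes operator $\mathbf{A}=-\mathbb{P}\Delta$ acts on divergence-free fields exactly as $-\Delta$: the Helmholtz projection is a Fourier multiplier that is the identity on $\widehat{\mathbf{g}}(\xi)$ whenever $\xi\cdot\widehat{\mathbf{g}}(\xi)=0$, and $\Delta\mathbf{g}$ is again divergence free, so by Plancherel $\|\mathbf{A}^{s/2}\mathbf{g}\|_{\mathbb{H}} = \bigl\||\xi|^{s}\widehat{\mathbf{g}}\bigr\|_{L^2} = \|\mathbf{g}\|_{\dot H^{s}}$ for $\mathbf{g}\in\mathbb{H}\cap D(\mathbf{A}^{s/2})$; in particular $\|\nabla\mathbf{v}\|_{\dot H^{\alpha_2}} = \bigl\||\xi|^{1+\alpha_2}\widehat{\mathbf{v}}\bigr\|_{L^2} = \|\mathbf{A}^{(1+\alpha_2)/2}\mathbf{v}\|_{\mathbb{H}}$. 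The homogeneous Sobolev embedding on $\R^3$ (Hardy--Littlewood--Sobolev for the Riesz potential), namely $\dot H^{s}(\R^3)\hookrightarrow L^{p}(\R^3)$ with $p^{-1}=\tfrac12-\tfrac s3$ for $0\le s<3/2$, applied to the three factors then yields $\|\mathbf{u}\|_{p_1}\le c_1\|\mathbf{A}^{\alpha_1/2}\mathbf{u}\|_{\mathbb{H}}$, $\|\nabla\mathbf{v}\|_{p_2}\le c_2\|\mathbf{A}^{(1+\alpha_2)/2}\mathbf{v}\|_{\mathbb{H}}$, and $\|\mathbf{w}\|_{p_3}\le c_3\|\mathbf{A}^{\alpha_3/2}\mathbf{w}\|_{\mathbb{H}}$, with $c=c_1c_2c_3=c(\alpha_i,\R^3)$.

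What is left is the exponent bookkeeping. Under $p_i^{-1}=\tfrac12-\tfrac{\alpha_i}{3}$ the H\"older relation $\sum_i p_i^{-1}=1$ is equivalent to $\alpha_1+\alpha_2+\alpha_3=3/2$, i.e., to the hypothesis, while each Sobolev embedding used is admissible precisely when $0\le\alpha_i<3/2$. Since the $\alpha_i$ are nonnegative and sum to $3/2$, one of them can reach the forbidden value $3/2$ only when the other two vanish, i.e., only at one of the three excluded corners $(3/2,0,0)$, $(0,3/2,0)$, $(0,0,3/2)$; off those corners all three embeddings hold and the estimate follows. One should also note that for $\mathbf{u},\mathbf{v},\mathbf{w}\in D(\mathbf{A})$ every fractional-power norm above is finite: for small exponents the low frequencies are absorbed by the $[L^2(\R^3)]^3$ norm, while $(1+\alpha_2)/2\le 5/4<2$ keeps the high frequencies under control of the $D(\mathbf{A})$ norm.

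The one genuinely delicate point, and the expected main obstacle, is the endpoint behaviour of the Sobolev embedding: one must verify that $\dot H^{s}\hookrightarrow L^{p}$ persists for every $s<3/2$ (the failure $\dot H^{3/2}(\R^3)\not\hookrightarrow L^{\infty}$ being exactly what rules out the corner cases) and that the constants $c_i$, which degenerate as $\alpha_i\uparrow 3/2$, may legitimately be folded into $c=c(\alpha_i,\R^3)$. Everything else is a routine chain of H\"older's and Sobolev's inequalities; the substitution of the exact Fourier symbol calculus on $\R^3$ for the bounded-domain spectral and Poincar\'e apparatus of \cite{CF} is what carries the argument.
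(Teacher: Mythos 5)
The paper offers no proof of this theorem at all: it is quoted as an extension, established in [GZ], of the Constantin--Foias estimate, so there is no in-paper argument to compare yours against. Your H\"older-plus-homogeneous-Sobolev proof --- using that $\mathbf{A}$ acts as $-\Delta$ on divergence-free fields over $\R^3$ so that $\|\mathbf{A}^{s/2}\mathbf{g}\|_{\mathbb{H}}=\|\mathbf{g}\|_{\dot H^s}$, with the exponents $p_i^{-1}=\tfrac12-\tfrac{\alpha_i}{3}$ and the three corners excluded precisely because $\dot H^{3/2}(\R^3)\not\hookrightarrow L^\infty$ --- is the standard route behind the cited result and is correct; the only blemish is the side remark on finiteness, since for $\alpha_2>1$ one has $(1+\alpha_2)/2>1$ and $\mathbf{v}\in D(\mathbf{A})$ need not lie in $D(\mathbf{A}^{(1+\alpha_2)/2})$, but in that case the inequality holds vacuously and the estimate is unaffected.
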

In \cite{GZ} we showed that, by renorming $\mathbb{H}$, we could prove a very strong inequality for equation (3).  Since this result is not well-known and important for this paper, we give a proof. First we need to review the Stokes equation.

If we drop the nonlinear term, we get the well-known Stokes equation ($\mathbb{P}{\mathbf{f}}(t)={\mathbf 0}$):
\beqa
\begin{gathered}
  \partial _t {\mathbf{u}} =  - \nu {\mathbf{Au}}  {\text{ in (}}0,T) \times \R^3 , \hfill \\
  {\text{                              }}\mathop {\lim }\limits_{\left\| {\mathbf{x}} \right\| \to \infty } {\mathbf{u}}(t,{\mathbf{x}}) = 0{\text{ on }}\left( {0,T} \right) \times \mathbb{R}^3, \hfill \\
  {\text{                              }}{\mathbf{u}}(0,{\mathbf{x}}) = {\mathbf{u}}_0 ({\mathbf{x}}){\text{ in }}\R^3. \hfill \\ 
\end{gathered} 
\eeqa
A proof of the next theorem may be found in Sell and You \cite{SY} (page 114):
\begin{thm}
Let  $\bf{A}$ be the Stokes operator on ${\mathbb{R}}^3$.  Then the following holds:
\begin{enumerate}
\item The operator $\bf{A}$  is a positive selfadjoint generator of a contraction semigroup $T(t)$.
\item The operator $\bf{A}$ is sectorial and $T(t)$ is analytic.
\end{enumerate}
\end{thm}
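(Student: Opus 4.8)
The plan is to exploit the translation invariance of the whole-space problem and to diagonalize everything with the Fourier transform, which turns the Leray projection and the Laplacian into commuting multiplication operators. First I would record that on $\mathbb{R}^3$ the projection $\mathbb{P}$ is the Fourier multiplier with matrix symbol $P(\xi) = I - \xi\xi^{\mathsf T}/|\xi|^2$, an orthogonal projection for each $\xi \neq 0$, while $-\Delta$ is multiplication by $|\xi|^2$. Hence $\mathbf{A} = -\mathbb{P}\Delta$ is unitarily equivalent, via $\mathcal{F}$, to multiplication by the real symmetric positive semidefinite matrix $|\xi|^2 P(\xi)$ acting on $\widehat{\mathbb{H}_0}$, the subspace of $[L^2(\mathbb{R}^3)]^3$ of fields $\hat{\mathbf{u}}(\xi)$ with $\xi\cdot\hat{\mathbf{u}}(\xi)=0$ a.e. Multiplication by a real, nonnegative, matrix-valued measurable function is a nonnegative selfadjoint operator on its natural domain $\{\hat{\mathbf{u}} : |\xi|^2\hat{\mathbf{u}} \in [L^2]^3\}$; pulling back, $\mathbf{A} \ge 0$ is selfadjoint with $D(\mathbf{A}) = \mathbb{H}_0 \cap [H^2(\mathbb{R}^3)]^3$, which settles the ``positive selfadjoint'' part of (1).

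Next, since $\mathbf{A} \ge 0$ is selfadjoint, the spectral theorem produces $T(t) = e^{-t\mathbf{A}}$; on the Fourier side this is multiplication by $e^{-t|\xi|^2 P(\xi)} = (I - P(\xi)) + P(\xi)e^{-t|\xi|^2}$, which on $\widehat{\mathbb{H}_0}$ (where $P(\xi)\hat{\mathbf{u}} = \hat{\mathbf{u}}$) reduces to multiplication by the scalar $e^{-t|\xi|^2}$. Since $0 < e^{-t|\xi|^2} \le 1$ for $t \ge 0$, $T(t)$ is a strongly continuous semigroup of selfadjoint contractions and $-\mathbf{A}$ is its generator, completing (1).

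For (2), sectoriality is a general feature of nonnegative selfadjoint operators: because $\sigma(\mathbf{A}) \subset [0,\infty)$ one has $\|(\mathbf{A} - \lambda I)^{-1}\| = 1/\operatorname{dist}(\lambda, \sigma(\mathbf{A}))$, which on any sector about the negative real axis is bounded by $M/|\lambda|$ --- precisely the resolvent estimate defining a sectorial generator --- so $-\mathbf{A}$ generates a bounded analytic semigroup on $\mathbb{H}_0$. The Fourier picture makes the analyticity explicit as well: $e^{-z|\xi|^2}$ is bounded by $1$ and analytic in $\operatorname{Re} z > 0$, so $z \mapsto T(z)$, multiplication by $e^{-z|\xi|^2}$ on $\widehat{\mathbb{H}_0}$, extends to a uniformly bounded analytic family on the right half-plane, which is assertion (2).

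The only genuinely delicate point, and the one I would treat with care, is the identification of the domain $D(\mathbf{A})$ together with the verification that the operator arising abstractly from the closed symmetric form $a(\mathbf{u},\mathbf{v}) = \langle \nabla\mathbf{u}, \nabla\mathbf{v}\rangle$ on $\mathbb{V}_0(\mathbb{R}^3)$ really coincides with $-\mathbb{P}\Delta$ on smooth compactly supported divergence-free fields. On a general domain this is elliptic (Stokes) regularity; on $\mathbb{R}^3$ it collapses to the Fourier-multiplier computation above, the density of the divergence-free $(\mathbb{C}_0^\infty[\mathbb{R}^3])^3$ in $\mathbb{H}_0$, and the cited fact that $\mathbb{H}_0^k[\mathbb{R}^3]^3 = \mathbb{H}^k[\mathbb{R}^3]^3$. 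Everything else is the standard spectral theory of nonnegative selfadjoint operators.
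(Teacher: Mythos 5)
Your argument is correct. Note, however, that the paper does not actually prove this theorem at all: it is stated with a bare citation to Sell and You (page 114), so there is no internal proof to compare against. Your Fourier-multiplier argument is the standard self-contained proof in the whole-space setting, and it is worth spelling out why it works only there: on $\mathbb{R}^3$ the Leray projection $\mathbb{P}$ and $-\Delta$ are both translation-invariant, hence commuting Fourier multipliers, so $\mathbf{A}=-\mathbb{P}\Delta$ collapses to multiplication by $|\xi|^2$ on the solenoidal subspace $\{\hat{\mathbf{u}}:\xi\cdot\hat{\mathbf{u}}(\xi)=0\}$ --- i.e.\ the Stokes operator literally coincides with the Laplacian restricted to divergence-free fields, and all of (1) and (2) follow from scalar spectral theory. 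On a bounded domain $\mathbb{P}$ and $\Delta$ do not commute and one must instead go through the Stokes form and elliptic regularity, which is essentially what the cited reference does; your closing paragraph correctly identifies the domain characterization as the one point where that machinery would otherwise be needed. Two small points of care: the operator is nonnegative with $0\in\sigma(\mathbf{A})$ (no spectral gap on $\mathbb{R}^3$), so ``positive'' here can only mean $\left\langle \mathbf{Au},\mathbf{u}\right\rangle>0$ for $\mathbf{u}\neq 0$, not boundedness below by a positive constant --- a distinction that matters later in the paper, where a decay rate $e^{-\omega t}$ is invoked; and the resolvent identity $\|(\mathbf{A}-\lambda I)^{-1}\|=1/\operatorname{dist}(\lambda,\sigma(\mathbf{A}))$ you use for sectoriality is exactly the normal-operator fact that makes every nonnegative selfadjoint operator sectorial of angle $\pi/2$, so that part needs no Fourier input at all.
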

\subsection*{Equivalent Norms}
Recall that an equivalent norm, $\left\| {\, \cdot \,} \right\|_{\mathcal{H},1}$, on a Hilbert space $\mathcal{H}$, with norm $\left\| {\, \cdot \,} \right\|_{\mathcal{H}}$, is one that satisfies: (for positive constants $M,\; M_1$) 
\[
\left\| u \right\|_\mathcal{H}  \leqslant M \left\| u \right\|_{\mathcal{H},1}  \leqslant M_1 \left\| u \right\|_\mathcal{H},\; u \in \mathcal{H}. 
\]
It is easy to show that any equivalent norm on $\mathcal{H}$ can be identified with a transformation of $\mathcal{H}$ which preserves the topology. In order to see how  an equivalent norm can help us, let $T(t)=exp\{-t\bf{A}\}$ be the analytic contraction semigroup  generated by the Stokes operator $\bf{A}$, with $\left\| {T(t)\bf u} \right\|_\mathbb{H}  \leqslant  e^{-\omega t} \left\| \bf u \right\|_\mathbb{H}$.  Let $S(t)= e^{\om t}T(t)$ and choose $M$  so that $\left\| \bf u \right\|_{ {\mathbb{H}},1}  = \left\| {S(r)\bf u} \right\|_ {\mathbb{H}}$ is an equivalent norm, where $r$ is a {\it fixed value}, to be determined.  Since $\bf{A}$ is analytic,  there is a constant $c_z$ such that, for ${\bf u} \in D({\bf A}^z)$,
\[
\left\| {{\mathbf{A}}^z {\mathbf{u}}} \right\|_{\mathbb{H},1}  = e^{ \omega r} \left\| {{\mathbf{A}}^z T(r){\mathbf{u}}} \right\|_\mathbb{H}  \leqslant e^{ \omega r}  e^{-\omega r} \frac{c_z}
{{(r)^z }}\left\| {\mathbf{u}} \right\|_\mathbb{H}  \leqslant \frac{{Mc_z}}
{{(r)^z }}\left\| {\mathbf{u}} \right\|_{\mathbb{H},1}.
\]
Since the norms are equivalent, we also have that
\beqn
\left\| {{\mathbf{A}}^z {\mathbf{u}}} \right\|_{\mathbb{H}} \le M \left\| {{\mathbf{A}}^z {\mathbf{u}}} \right\|_{\mathbb{H},1}   \leqslant \frac{{M^2c_z}}
{{(r)^z }}\left\| {\mathbf{u}} \right\|_{\mathbb{H},1}.
\eeqn

From Theorem 4, we have the following result: 
\begin{thm}
Let ${\bf u} \in D({\bf{A}})$, set ${\bf S}=S(r)$ and renorm $\mathbb{H}$ so that $\left\| \bf u \right\|_{\mathbb{H},1}  = \left\| {{\mathbf{S}} \bf u} \right\|_\mathbb{H}$.  We define ${\bf{b}}({\mathbf{u}},{\mathbf{v}} ,{\mathbf{w}})_{\mathbb{H},1}  = \left\langle {{\mathbf{S}}C({\mathbf{u}},{\mathbf{v}}),{\mathbf{Sw}}} \right\rangle _\mathbb{H}$. Then:
\begin{enumerate}
\item
If we let  ${\alpha_1=0}$, ${ \alpha_2=1}$ and $\alpha_3=1/2$, there are positive constants $c=c(\al_i, \R^3),\; c_1$ and $c_2$  such that
\beqn
\begin{gathered}
\left| {\left\langle {{{C}}({\mathbf{u}},{\mathbf{v}}),{\mathbf{w}}} \right\rangle _{\mathbb{H},1} } \right| \le \frac{{M^4 cc_1 c_2 }}
{{r^{5/4}  }} \left\| {\bf{u}} \right\|_{\mathbb{H},1} \left\| {\bf{w}} \right\|_{\mathbb{H},1} \left\| {\bf{v}} \right\|_{\mathbb{H},1} \; {\rm and}  \hfill \\
\left| {\left\langle {{{C}}({\mathbf{v}},{\mathbf{u}}),{\mathbf{w}}} \right\rangle _{\mathbb{H},1} } \right| \le \frac{{M^4 cc_1 c_2 }}
{{r^{5/4}  }} \left\| {\bf{u}} \right\|_{\mathbb{H},1} \left\| {\bf{w}} \right\|_{\mathbb{H},1} \left\| {\bf{v}} \right\|_{\mathbb{H},1}.  \hfill \\
\end{gathered}
\eeqn
\item
\beqn
max\{ \left\| {{{C}}({\mathbf{u}},{\mathbf{v}})} \right\|_{\mathbb{H},1}, \ \left\| {{{C}}({\mathbf{v}},{\mathbf{u}})} \right\|_{\mathbb{H},1} \} \leqslant \frac{{M^4 cc_1 c_2 }}
{{r^{5/4}  }}\left\| {\mathbf{u}} \right\|_{\mathbb{H},1} \left\| {\mathbf{v}} \right\|_{\mathbb{H},1}. 
\eeqn
\end{enumerate}
\end{thm}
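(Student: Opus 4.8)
The plan is to push the renorming entirely onto the test vector and then invoke Theorem 4 together with inequality (4) and the contraction/analyticity properties of the Stokes semigroup from Theorem 5. Since $\mathbf{A}$ is positive and self-adjoint (Theorem 5), $T(r)=e^{-r\mathbf{A}}$ — and hence $\mathbf{S}=S(r)=e^{\omega r}T(r)$ — is a self-adjoint contraction on $\mathbb{H}$ that commutes with every fractional power $\mathbf{A}^{z}$ and satisfies $\mathbf{S}^{2}=S(2r)$. Consequently, for $\mathbf{u},\mathbf{v}\in D(\mathbf{A})$ and $\mathbf{w}\in\mathbb{H}$,
\[ \langle C(\mathbf{u},\mathbf{v}),\mathbf{w}\rangle_{\mathbb{H},1}=\langle\mathbf{S}C(\mathbf{u},\mathbf{v}),\mathbf{S}\mathbf{w}\rangle_{\mathbb{H}}=\langle C(\mathbf{u},\mathbf{v}),\mathbf{S}^{2}\mathbf{w}\rangle_{\mathbb{H}}, \]
and $\mathbf{S}^{2}\mathbf{w}\in D(\mathbf{A})$ because $T$ is analytic. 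Thus the sole effect of the equivalent norm on the left-hand side is to replace the test vector $\mathbf{w}$ by the smoothed vector $\mathbf{S}^{2}\mathbf{w}$, which the semigroup estimates will absorb cheaply.

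For part (1), I would apply Theorem 4 to $\langle C(\mathbf{u},\mathbf{v}),\mathbf{S}^{2}\mathbf{w}\rangle_{\mathbb{H}}$ with the split $(\alpha_{1},\alpha_{2},\alpha_{3})=(0,1,\tfrac{1}{2})$, which is nonnegative, sums to $3/2$, and avoids the three excluded triples, so that
\[ \bigl|\langle C(\mathbf{u},\mathbf{v}),\mathbf{S}^{2}\mathbf{w}\rangle_{\mathbb{H}}\bigr|\le c\,\|\mathbf{u}\|_{\mathbb{H}}\,\|\mathbf{A}\mathbf{v}\|_{\mathbb{H}}\,\|\mathbf{A}^{1/4}\mathbf{S}^{2}\mathbf{w}\|_{\mathbb{H}}. \]
Then I would convert each factor to the renormed space: (i) $\|\mathbf{u}\|_{\mathbb{H}}\le M\|\mathbf{u}\|_{\mathbb{H},1}$ by norm equivalence; (ii) $\|\mathbf{A}\mathbf{v}\|_{\mathbb{H}}\le (M^{2}c_{1}/r)\|\mathbf{v}\|_{\mathbb{H},1}$ by inequality (4) with $z=1$, which is the reverse-Poincar\'{e} gain; (iii) for the last factor, using in turn $\|\mathbf{S}\mathbf{x}\|_{\mathbb{H}}=\|\mathbf{x}\|_{\mathbb{H},1}$, the analyticity estimate displayed just before (4) (which reads $\|\mathbf{A}^{z}\mathbf{y}\|_{\mathbb{H},1}\le (Mc_{z}/r^{z})\|\mathbf{y}\|_{\mathbb{H},1}$, here with $z=\tfrac{1}{4}$ and $c_{2}:=c_{1/4}$), and $\|\mathbf{S}\|_{\mathbb{H}}\le1$, one obtains $\|\mathbf{A}^{1/4}\mathbf{S}^{2}\mathbf{w}\|_{\mathbb{H}}=\|\mathbf{A}^{1/4}\mathbf{S}\mathbf{w}\|_{\mathbb{H},1}\le (Mc_{2}/r^{1/4})\|\mathbf{S}\mathbf{w}\|_{\mathbb{H},1}\le (Mc_{2}/r^{1/4})\|\mathbf{w}\|_{\mathbb{H},1}$. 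Multiplying the three estimates yields the first line of part (1) with constant $M^{4}cc_{1}c_{2}/r^{5/4}$. The second line is the same computation with the roles of $\mathbf{u}$ and $\mathbf{v}$ interchanged in Theorem 4 ($\mathbf{u}$ now carrying exponent $0$ and $\mathbf{v}$ exponent $1$); the right-hand side is unchanged since it is symmetric in $\mathbf{u}$ and $\mathbf{v}$.

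For part (2), one first notes that $C(\mathbf{u},\mathbf{v}),C(\mathbf{v},\mathbf{u})\in\mathbb{H}$: for $\mathbf{u},\mathbf{v}\in D(\mathbf{A})$ the components of $\mathbf{u},\mathbf{v}$ lie in $H^{2}(\mathbb{R}^{3})$, so $\mathbf{u}\in L^{\infty}$ by Sobolev embedding and $\nabla\mathbf{v}\in L^{2}$, hence $(\mathbf{u}\cdot\nabla)\mathbf{v}\in L^{2}$ and $C(\mathbf{u},\mathbf{v})=\mathbb{P}(\mathbf{u}\cdot\nabla)\mathbf{v}\in\mathbb{H}$. The estimates in part (1) were derived for every $\mathbf{w}\in\mathbb{H}$ (they used only $\mathbf{u},\mathbf{v}\in D(\mathbf{A})$), so since in a Hilbert space
\[ \|\mathbf{x}\|_{\mathbb{H},1}=\sup\bigl\{\,|\langle\mathbf{x},\mathbf{w}\rangle_{\mathbb{H},1}| : \mathbf{w}\in\mathbb{H},\ \|\mathbf{w}\|_{\mathbb{H},1}\le1\,\bigr\}, \]
taking the supremum in part (1) over such $\mathbf{w}$ shows that $\|C(\mathbf{u},\mathbf{v})\|_{\mathbb{H},1}$ and $\|C(\mathbf{v},\mathbf{u})\|_{\mathbb{H},1}$ are both bounded by $(M^{4}cc_{1}c_{2}/r^{5/4})\|\mathbf{u}\|_{\mathbb{H},1}\|\mathbf{v}\|_{\mathbb{H},1}$, which is part (2).

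There is no deep step here; the substance is the bookkeeping ensuring that the three conversions produce exactly the powers $M^{4}$ and $r^{-5/4}$, which hinges on routing the residual $\mathbf{A}^{1/4}$ through the smoothing $\mathbf{S}$ (i.e. through the $T(r)$ hidden in the $\|\cdot\|_{\mathbb{H},1}$-norm) rather than onto $\mathbf{w}$ directly, and on inequality (4), which says that in the renormed space a full power of $\mathbf{A}$ on $\mathbf{v}$ is already controlled by $\|\mathbf{v}\|_{\mathbb{H},1}$ with a $r^{-1}$ gain. The only points that genuinely need checking are the self-adjointness and $\mathbf{A}^{z}$-commutativity of $\mathbf{S}$, used to move the smoothing onto $\mathbf{w}$ at no cost, and — for part (2) — that $C(\mathbf{u},\mathbf{v})$ truly lies in $\mathbb{H}$, so that the duality formula for $\|\cdot\|_{\mathbb{H},1}$ is available.
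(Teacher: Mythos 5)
Your proof is correct within the paper's framework and arrives at the same constant $M^{4}cc_{1}c_{2}/r^{5/4}$, but it distributes the fractional powers of $\mathbf{A}$ differently from the paper. The paper, after moving both factors of $\mathbf{S}$ onto the test vector ($\mathbf{w}_{1}=\mathbf{S}^{2}\mathbf{w}$) exactly as you do, first invokes the antisymmetry of the trilinear form in its last two arguments, $\mathbf{b}(\mathbf{u},\mathbf{v},\mathbf{w}_{1})_{\mathbb{H}}=-\mathbf{b}(\mathbf{u},\mathbf{w}_{1},\mathbf{v})_{\mathbb{H}}$ (integration by parts plus self-adjointness), and only then applies Theorem 4 with $(\alpha_{1},\alpha_{2},\alpha_{3})=(0,1,1/2)$; the effect is that the full power $\mathbf{A}^{(1+\alpha_{2})/2}=\mathbf{A}$ lands on the doubly smoothed vector $\mathbf{w}_{1}$, where the analytic-semigroup bound $\left\| \mathbf{A}T(r)\right\| \le c_{1}e^{-\omega r}/r$ absorbs it directly, and only $\mathbf{A}^{1/4}$ falls on the raw vector $\mathbf{v}$. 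You skip the swap and apply Theorem 4 to $\left\langle C(\mathbf{u},\mathbf{v}),\mathbf{S}^{2}\mathbf{w}\right\rangle_{\mathbb{H}}$ as it stands, so the full power of $\mathbf{A}$ falls on $\mathbf{v}$ and you must control $\left\| \mathbf{A}\mathbf{v}\right\|_{\mathbb{H}}$ by $(M^{2}c_{1}/r)\left\| \mathbf{v}\right\|_{\mathbb{H},1}$, i.e., you need inequality (4) at $z=1$ applied to a vector carrying no smoothing factor. Since the paper asserts (4) for all $z$ and all $\mathbf{u}\in D(\mathbf{A}^{z})$ (and itself uses it at $z=1/4$ on the raw $\mathbf{v}$), your step is legitimate by the paper's own rules; just be aware that (4) at $z=1$ is the strongest form of the reverse-Poincar\'{e} claim, and the paper's integration-by-parts arrangement is deliberately the more conservative one, routing the heavy derivative through the smoothing rather than relying on (4) at full strength. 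Your duality argument for part (2) is the standard way to make precise what the paper dismisses as ``clear.'' One cosmetic slip: for $C(\mathbf{v},\mathbf{u})=\mathbb{P}(\mathbf{v}\cdot\nabla)\mathbf{u}$ it is $\mathbf{u}$, the differentiated (second) argument, that carries the exponent $1$ and $\mathbf{v}$ the exponent $0$, the reverse of what you wrote; the final bound is unaffected because the right-hand side is symmetric in $\mathbf{u}$ and $\mathbf{v}$.
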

\begin{proof}

We prove the first equation of (5), the proof of the second is similar.   Set $S(r)={\bf{S}}$ and ${\bf{S}}^2{\bf{w}}={\bf{w}}_1$,  then we have:  
\[
{\bf{b}}( {\bf{u}},{\bf{v}} ,{\bf{w}})_{\mathbb{H},1}  = \left\langle {\bf{S}}C({\bf{u}},{\bf{v}}),{\bf{S}{\bf{w}}} \right\rangle _\mathbb{H} = {\bf{b}}({\bf{u}},{\bf{v}},{\bf{w}}_1 )_{\mathbb{H}}. 
\]
Using the selfadjoint property of ${\bf{A}}$, and integration by parts, we have
$$
{\bf{b}}({\bf{u}},{\bf{v}},{\bf{w}}_1)_{\mathbb{H}}   =  -  {\bf{b}}({\bf{u}},{\bf{w_{1}}},{\bf{v}})_{\mathbb{H}}.
$$
It follows that:
\[
\left| {\left\langle {{C}}({\bf{u}},{\bf{v}}),{\bf{w}} \right\rangle _{\mathbb{H},1}} \right| 
\le c\left\| {{\bf{A}}^{\alpha _1 /2} {\bf{u}}} \right\|_{\mathbb{H}} \left\| {{\bf{A}}^{ (1 + \alpha _2 )/2} {\bf{w}}_{1} }\right\|_{\mathbb{H}} 
\left\| {{\bf{A}}^{\alpha _3 /2} {\bf{v}}} \right\|_{\mathbb{H}}. 
\]
Setting $ \alpha _1 =  0$, $\alpha _2 = 1\; \alpha _3 = 1/2$ and, using equation (4), we have: 
\[
\begin{gathered}
  \left| {\left\langle {{{C}}({\mathbf{u}},{\mathbf{v}}),{\mathbf{w}}} \right\rangle _{{\mathbb{H}},1} } \right| \le c \left\| {\bf{u}} \right\|_{\mathbb{H}} \left\| {{\mathbf{A}} {\mathbf{w}}_1 } \right\|_\mathbb{H}  \left\| {\bf{A}}^{1/4}{\bf{v}} \right\|_\mathbb{H}\hfill \\
  {\text{                       }} \le  \frac{{M cc_1 c_2}}
{{r^{5/4}  }}\left\| {\bf{u}} \right\|_{\mathbb{H}} \left\| {\bf{w}} \right\|_\mathbb{H} \left\| {\bf{v}} \right\|_{\mathbb{H}}  \hfill \\
  {\text{                       }} \le \frac{{M^4 cc_1 c_2 }}
{{r^{5/4}  }} \left\| {\bf{u}} \right\|_{\mathbb{H},1} \left\| {\bf{w}} \right\|_{\mathbb{H},1} \left\| {\bf{v}} \right\|_{\mathbb{H},1}.  \hfill \\ 
\end{gathered} 
\]
The proof of (6) is clear.
\end{proof}
The following extension of the Poincar\'{e} inequality will also prove useful.
\begin{lem}  Let ${\bf A}^{1/2}$ generate an analytic contraction semigroup $S(t)$.  If $r > 0$ is any  fixed number, then there exists an $\al=\al(r) >0$ such that for any ${\bf u} \in D({\bf A}^{1/2})$,
\[
\alpha^{1/2} \left\| {\mathbf{u}} \right\|_{H,1}  \leqslant r^{1/2}\left\| {{\mathbf{A}}^{1/2} {\bf u}} \right\|_{H,1}.
\]
\end{lem}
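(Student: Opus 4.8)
The plan is to realize the equivalent norm through the analytic semigroup itself and to turn the stated inequality into a decay estimate for the orbit $t\mapsto S(t)\mathbf g$. By Theorem 5 the Stokes operator $\mathbf A$ is positive and selfadjoint, hence so is $\mathbf A^{1/2}$, and $-\mathbf A^{1/2}$ generates the analytic contraction semigroup $S(t)=e^{-t\mathbf A^{1/2}}$; just as in the renorming above we take $\|\mathbf u\|_{H,1}:=\|S(r)\mathbf u\|_{\mathbb H}$ as the equivalent norm. Since $S(t)$ is a function of $\mathbf A$ it commutes with $\mathbf A^{1/2}$, so $\|\mathbf A^{1/2}\mathbf u\|_{H,1}=\|S(r)\mathbf A^{1/2}\mathbf u\|_{\mathbb H}=\|\mathbf A^{1/2}S(r)\mathbf u\|_{\mathbb H}$; thus it suffices to produce a constant $K=K(r)>0$ with $\|S(r)\mathbf u\|_{\mathbb H}\le K\|S(r)\mathbf A^{1/2}\mathbf u\|_{\mathbb H}$ for all $\mathbf u\in D(\mathbf A^{1/2})$, after which the lemma follows with $\alpha(r):=r/K^2$.

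The first step is the representation coming from $\frac{d}{ds}S(s)\mathbf u=-S(s)\mathbf A^{1/2}\mathbf u$. On $\mathbb R^3$ the operator $\mathbf A$ has no eigenvalue at $0$, so $S(s)\mathbf u\to\mathbf 0$ strongly, and integrating from $r$ to $\infty$ gives
\[
S(r)\mathbf u=\int_r^\infty S(s)\mathbf A^{1/2}\mathbf u\,ds=\int_0^\infty S(t)\,\mathbf g\,dt,\qquad \mathbf g:=S(r)\mathbf A^{1/2}\mathbf u,
\]
where $\mathbf g$ is a smooth vector (it lies in the domain of every power of $\mathbf A$) with $\|\mathbf g\|_{\mathbb H}=\|\mathbf A^{1/2}\mathbf u\|_{H,1}$. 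We must therefore bound $\big\|\int_0^\infty S(t)\mathbf g\,dt\big\|_{\mathbb H}$ by $K\|\mathbf g\|_{\mathbb H}$. The natural tools are the contractivity $\|S(t)\|\le 1$ and the analytic‑semigroup estimates $\|\mathbf A^{\beta/2}S(t)\|\le c_\beta t^{-\beta}$; I would split $\int_0^\infty=\int_0^1+\int_1^\infty$, treat $\int_0^1$ by contractivity, and on $[1,\infty)$ attempt to trade the smoothing already present in $\mathbf g$ for polynomial‑in‑$t$ decay of $\|S(t)\mathbf g\|_{\mathbb H}$ strong enough to make the tail converge, keeping every constant dependent only on $r$ and the sectoriality data of $\mathbf A$.

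The main obstacle — and essentially the entire content of the lemma — is precisely this tail estimate. The crude bound $\|S(t)\mathbf g\|\le\|\mathbf g\|$ yields a divergent integral, and a spectral view shows there is no fibrewise shortcut: writing $\{E_\lambda\}$ for the spectral resolution of $\mathbf A$,
\[
\|\mathbf u\|_{H,1}^2=\int_0^\infty e^{-2r\sqrt\lambda}\,d\|E_\lambda\mathbf u\|_{\mathbb H}^2,\qquad \|\mathbf A^{1/2}\mathbf u\|_{H,1}^2=\int_0^\infty\lambda\,e^{-2r\sqrt\lambda}\,d\|E_\lambda\mathbf u\|_{\mathbb H}^2,
\]
and no inequality between these integrands is uniform as $\lambda\to 0^+$, so the estimate cannot be carried out $\lambda$ by $\lambda$ and must genuinely exploit the cancellation inside $\int_0^\infty S(t)\mathbf g\,dt$ together with the behaviour of $S(t)$ as $t\to\infty$. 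I therefore expect the crux to be a careful, $\mathbf u$‑independent bound for $\int_1^\infty\|S(t)S(r)\mathbf A^{1/2}\mathbf u\|_{\mathbb H}\,dt$ in terms of $\|S(r)\mathbf A^{1/2}\mathbf u\|_{\mathbb H}$ alone; with that in hand the rest is bookkeeping, and the constant $\alpha(r)=r/K(r)^2$ drops out.
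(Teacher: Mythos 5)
There is a genuine gap: your proposal never proves the inequality, it only reduces it to a bound you explicitly leave open. Everything hinges on producing $K=K(r)$ with $\bigl\| \int_0^\infty S(t)\mathbf{g}\,dt \bigr\|_{\mathbb H}\le K\|\mathbf{g}\|_{\mathbb H}$ for $\mathbf{g}=S(r)\mathbf{A}^{1/2}\mathbf{u}$, and you end by saying you \emph{expect} this tail estimate to be the crux. Worse, this reduction cannot be completed: since $S$ is a function of $\mathbf{A}$, the operator $\mathbf{u}\mapsto\int_0^\infty S(t)\mathbf{g}\,dt$ is, spectrally, multiplication by $\lambda^{-1/2}$ applied to $\mathbf{g}$, i.e.\ the claim that it is bounded is precisely the claim that $\mathbf{A}^{-1/2}$ is bounded on $\mathbb{H}$. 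There is no ``cancellation inside the integral'' to exploit --- all the spectral fibers contribute with the same sign --- and your own spectral display already shows the required inequality fails uniformly as $\lambda\to 0^+$, because the spectrum of the Stokes operator on $\mathbb{R}^3$ reaches down to $0$ without $0$ being an eigenvalue. So the obstacle you flag is not a technical difficulty to be overcome by a ``careful bound''; it is a proof that your program (and indeed any fiberwise or operator-norm version of the estimate) terminates unsuccessfully.

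For comparison, the paper's argument is different in structure and much shorter: it integrates $\frac{d}{dt}S(t)\mathbf{u}$ over the \emph{finite} interval $[0,r^{1/2}]$ to obtain $S(r^{1/2})\mathbf{u}-\mathbf{u}=\int_0^{r^{1/2}}\mathbf{A}^{1/2}S(t)\mathbf{u}\,dt$, bounds the right side above by $r^{1/2}\|\mathbf{A}^{1/2}\mathbf{u}\|_{H,1}$ using contractivity, and then simply ``chooses $\alpha^{1/2}$ so that'' $\|S(r^{1/2})\mathbf{u}-\mathbf{u}\|_{H,1}\ge\alpha^{1/2}\|\mathbf{u}\|_{H,1}$. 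That asserted lower bound on $I-S(r^{1/2})$ is exactly the same spectral statement you showed degenerates near $\lambda=0$: for $\mathbf{u}$ spectrally concentrated near $0$ one has $\|S(r^{1/2})\mathbf{u}-\mathbf{u}\|_{H,1}=o(\|\mathbf{u}\|_{H,1})$, so no $\mathbf{u}$-independent $\alpha(r)>0$ exists on $\mathbb{R}^3$. In short: your attempt is incomplete where the paper's proof is merely asserted, and the missing step is the same in both --- a Poincar\'e-type lower bound that holds on bounded domains (where $\mathbf{A}$ has a spectral gap $\lambda_1>0$) but not on all of $\mathbb{R}^3$. If you want a correct statement, you must either restrict to a bounded domain or restrict $\mathbf{u}$ to a spectral subspace bounded away from $0$; as stated for the whole of $D(\mathbf{A}^{1/2})$ over $\mathbb{R}^3$, the lemma cannot be proved by either route.
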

\begin{proof}  First observe that
\[
\int_0^{r^{1/2}} {{\mathbf{A}}^{1/2}S(t)} {\mathbf{u}}dt = \int_0^{r^{1/2}} {\frac{d}
{{dt}}S(t)} {\mathbf{u}}dt = S({r^{1/2}}){\mathbf{u}} - {\mathbf{u}}.
\]
Now choose $\al^{1/2}$ so that $\left\| {S({r^{1/2}}){\mathbf{u}} - {\mathbf{u}}} \right\|_{H,1}  \ge \al^{1/2} \left\| {\mathbf{u}} \right\|_{H,1}$.  It follows that
\[
\al^{1/2} \left\| {\mathbf{u}} \right\|_{H,1}  \leqslant \int_0^{r^{1/2}} {\left\| {S(t){\mathbf{A}}^{1/2} {\bf u}} \right\|_{H,1} dt}  \leqslant {r^{1/2}}\left\| {{\mathbf{A}}^{1/2} {\bf u}} \right\|_{H,1} .
\]
\end{proof}
\section*{M-Dissipative Conditions} 
Let us assume that 
$
{\mathbf{f}}(t) \in L^\infty[[0,\infty); {\mathbb H}]
$
and is H\"{o}lder continuous in $t$, with $\left\| {{\mathbf{f}}(t) - {\mathbf{f}}(\tau )} \right\|_{\mathbb{H},1}  \le d\left| {t - \tau } \right|^\theta,{\text{ }}d > 0,{\text{ }}0 < \theta  < 1$.  We can now rewrite equation (2) in the form:
\beqn
\begin{gathered}
  \partial _t {\mathbf{u}} =  {\mathcal{A}} ({\mathbf{u}},t) {\text{ in (}}0,T) \times \R^3 , \hfill \\
  {\mathcal{A}}({\mathbf{u}},t) =  - \nu{\bf A}{\mathbf{u}} -  {{C}}({\mathbf{u}},{\mathbf{u}}) +  \mathbb{P}{\mathbf{f}}(t). \hfill \\ 
\end{gathered} 
\eeqn
We begin with a study of the operator $ {\mathcal{A}}( \cdot ,t)$, for fixed $t$, and seek conditions depending on ${\mathbf{A}},  {\text{ }}\nu ,{\text{ }}  {\text{ and }}{\mathbf{f}}(t)$ which guarantee that $ {\mathcal{A}}( \cdot ,t)$ is m-dissipative for each $t$.  Clearly $
 {\mathcal{A}}( \cdot ,t)$ is defined on $D({\bf{A}}) $ and, since $ \nu \mathbf{A} $ is a closed positive (m-accretive) operator, $ - \nu {(\mathbf{A})}$ generates a linear contraction semigroup. Thus, we need to ensure that $\nu  {\mathcal{A}}( \cdot ,t)$ will be m-dissipative for each $t$.  The following Lemma follows from the properties of ${\bf f}(t)$.
\begin{lem} For $t \in I=[0, \infty)$ and, for each fixed ${\mathbf{u}} \in D({\bf{A}})$, $
 {\mathcal{A}}({\mathbf{u}},t)$ is H\"{o}lder continuous, with $
\left\| { {\mathcal{A}}({\mathbf{u}},t) -  {\mathcal{A}}({\mathbf{u}},\tau )} \right\|_{\mathbb{H},1}  \le d\left| {t - \tau } \right|^\theta$, where  $d$ is the H\"{o}lder constant for the function ${\mathbf{f}}(t)$. 
\end{lem}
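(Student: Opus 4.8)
The plan is to exploit the elementary observation that, in the definition $\mathcal{A}(\mathbf{u},t) = -\nu\mathbf{A}\mathbf{u} - C(\mathbf{u},\mathbf{u}) + \mathbb{P}\mathbf{f}(t)$ from equation~(8), the only term carrying a dependence on $t$ is the forcing term $\mathbb{P}\mathbf{f}(t)$. So the first step is: fix $\mathbf{u}\in D(\mathbf{A})$, take the difference of $\mathcal{A}(\mathbf{u},\cdot)$ at two times $t,\tau\in I$, and note that the contributions $-\nu\mathbf{A}\mathbf{u}$ and $-C(\mathbf{u},\mathbf{u})$ cancel identically, leaving
\[
\mathcal{A}(\mathbf{u},t) - \mathcal{A}(\mathbf{u},\tau) = \mathbb{P}\bigl(\mathbf{f}(t) - \mathbf{f}(\tau)\bigr).
\]
Thus the whole statement collapses to controlling $\bigl\|\mathbb{P}(\mathbf{f}(t)-\mathbf{f}(\tau))\bigr\|_{\mathbb{H},1}$ by the H\"older modulus of $\mathbf{f}$.

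Next I would invoke the standing hypotheses on $\mathbf{f}$. Since $\mathbf{f}(s)\in\mathbb{H}$ for every $s$, the Leray projection $\mathbb{P}$ acts as the identity on $\mathbf{f}(t)-\mathbf{f}(\tau)$ (recall $\mathbb{P}$ restricted to $\mathbb{H}=\mathbb{H}_0[\R^3]$ is the identity), whence
\[
\bigl\|\mathcal{A}(\mathbf{u},t) - \mathcal{A}(\mathbf{u},\tau)\bigr\|_{\mathbb{H},1} = \bigl\|\mathbf{f}(t) - \mathbf{f}(\tau)\bigr\|_{\mathbb{H},1} \le d\,|t-\tau|^{\theta},
\]
which is precisely the H\"older condition imposed on $\mathbf{f}$, with the same constant $d$ and exponent $\theta$. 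Since the right-hand side does not involve $\mathbf{u}$, the estimate is uniform over $\mathbf{u}\in D(\mathbf{A})$, as asserted. (If one prefers not to use $\mathbb{P}\mathbf{f}=\mathbf{f}$ directly, it suffices that $\mathbb{P}$ is an orthogonal projection in $[L^2(\R^3)]^3$, hence a contraction, and then combine this with the constants relating $\|\cdot\|_{\mathbb{H},1}$ to $\|\cdot\|_{\mathbb{H}}$; but the identity action is the cleanest route in the renormed space.)

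There is essentially no obstacle here: the only point meriting a line of care is confirming that the renorming $\|\cdot\|_{\mathbb{H},1}=\|S(r)\cdot\|_{\mathbb{H}}$, with $S(r)=e^{\omega r}T(r)$, does not interfere with the argument — and it does not, because $S(r)$ maps $\mathbb{H}$ into $\mathbb{H}$ and $\mathbb{P}$ is the identity there, so nothing needs to be reorganized. The variable $t$ enters only through the selection of $\mathbf{f}(t)$, so the H\"older continuity in $t$ of $\mathcal{A}(\mathbf{u},\cdot)$ is literally inherited from that of $\mathbf{f}$, and the lemma follows.
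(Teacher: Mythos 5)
Your argument is correct and is exactly the intended one: the paper offers no written proof beyond the remark that the lemma ``follows from the properties of ${\mathbf{f}}(t)$,'' and your observation that the $t$-independent terms $-\nu{\mathbf{Au}}$ and $-C({\mathbf{u}},{\mathbf{u}})$ cancel in the difference, reducing everything to the assumed H\"{o}lder bound $\left\| {{\mathbf{f}}(t) - {\mathbf{f}}(\tau )} \right\|_{\mathbb{H},1} \le d\left| {t - \tau } \right|^\theta$, is the natural fleshing-out of that remark. Your side comments on $\mathbb{P}$ acting as the identity on $\mathbb{H}$ and on the renorming not interfering are accurate and harmless.
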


\section*{Main Results} 
\begin{thm} Let $f = \sup _{t \in {\mathbf{R}}^ +  } \left\| {\mathbb{P}{\mathbf{f}}(t)} \right\|_{\mathbb{H},1}  < \infty $, then there exists a positive constants ${{u}}_ +, \; {u}_-  $, depending only on $f$, ${\mathbf{A}}$ and $\nu $  such that, for all ${\mathbf{u}}$ with $
0 \le {u}_- \le \left\| {\mathbf{u}} \right\|_{\mathbb{H},1}  \le {{u}}_ +  $, $ {\mathcal{A}}( \cdot ,t)$ is strongly dissipative. 
\end{thm}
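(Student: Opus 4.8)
\noindent The statement bundles a strong-dissipativity (difference) estimate with an absorption estimate, and I would obtain both from the same device: the dissipative Stokes term, amplified by the reverse Poincar\'{e} inequality of Lemma 7, dominates the nonlinear term, which Theorem 6 bounds by a cubic expression carrying a favourable power of $r$. Work throughout in the renormed inner product $\langle\cdot,\cdot\rangle_{\mathbb{H},1}$, and take $\mathbf{u},\mathbf{v}\in D(\mathbf{A})$. With $\mathbf{w}=\mathbf{u}-\mathbf{v}$, and since $\mathcal{A}(\mathbf{u},t)=-\nu\mathbf{A}\mathbf{u}-C(\mathbf{u},\mathbf{u})+\mathbb{P}\mathbf{f}(t)$, the forcing cancels in the difference, so
\[
\langle \mathcal{A}(\mathbf{u},t)-\mathcal{A}(\mathbf{v},t),\mathbf{w}\rangle_{\mathbb{H},1}
=-\nu\langle \mathbf{A}\mathbf{w},\mathbf{w}\rangle_{\mathbb{H},1}-\langle C(\mathbf{u},\mathbf{u})-C(\mathbf{v},\mathbf{v}),\mathbf{w}\rangle_{\mathbb{H},1}.
\]
Because $\mathbf{S}$ commutes with $\mathbf{A}$ and $\mathbf{A}$ is positive and selfadjoint, $\langle \mathbf{A}\mathbf{w},\mathbf{w}\rangle_{\mathbb{H},1}=\|\mathbf{A}^{1/2}\mathbf{w}\|_{\mathbb{H},1}^{2}$, and Lemma 7 bounds this below by $(\alpha/r)\|\mathbf{w}\|_{\mathbb{H},1}^{2}$; hence the linear term is at most $-(\nu\alpha/r)\|\mathbf{w}\|_{\mathbb{H},1}^{2}$.

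For the nonlinear term I would use bilinearity of $C$ to write $C(\mathbf{u},\mathbf{u})-C(\mathbf{v},\mathbf{v})=C(\mathbf{w},\mathbf{u})+C(\mathbf{v},\mathbf{w})$ and apply Theorem 6 (with $\alpha_{1}=0$, $\alpha_{2}=1$, $\alpha_{3}=1/2$) to each summand tested against $\mathbf{w}$, which bounds its absolute value by $(M^{4}c\,c_{1}c_{2}/r^{5/4})(\|\mathbf{u}\|_{\mathbb{H},1}+\|\mathbf{v}\|_{\mathbb{H},1})\|\mathbf{w}\|_{\mathbb{H},1}^{2}$. Combining with the linear estimate,
\[
\langle \mathcal{A}(\mathbf{u},t)-\mathcal{A}(\mathbf{v},t),\mathbf{w}\rangle_{\mathbb{H},1}
\le\Bigl(-\frac{\nu\alpha}{r}+\frac{M^{4}c\,c_{1}c_{2}}{r^{5/4}}\bigl(\|\mathbf{u}\|_{\mathbb{H},1}+\|\mathbf{v}\|_{\mathbb{H},1}\bigr)\Bigr)\|\mathbf{w}\|_{\mathbb{H},1}^{2},
\]
so if $u_{+}$ is taken small enough that $2\,(M^{4}c\,c_{1}c_{2}/r^{5/4})\,u_{+}\le\nu\alpha/(2r)$, then $\mathcal{A}(\cdot,t)$ is strongly dissipative, with $\delta=\nu\alpha/(2r)$, on $\{\mathbf{u}\in D(\mathbf{A}):\|\mathbf{u}\|_{\mathbb{H},1}\le u_{+}\}$, uniformly in $t$. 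The forcing has not yet entered.

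It enters, together with the lower cutoff $u_{-}$, through the companion absorption ($0$-dissipativity) estimate $\langle \mathcal{A}(\mathbf{u},t),\mathbf{u}\rangle_{\mathbb{H},1}\le 0$ that will be needed when Theorems 2 and 3 are invoked. Running the same computation for $\langle \mathcal{A}(\mathbf{u},t),\mathbf{u}\rangle_{\mathbb{H},1}$ — where now $\langle \mathbb{P}\mathbf{f}(t),\mathbf{u}\rangle_{\mathbb{H},1}\le f\|\mathbf{u}\|_{\mathbb{H},1}$ survives — yields
\[
\langle \mathcal{A}(\mathbf{u},t),\mathbf{u}\rangle_{\mathbb{H},1}
\le\Bigl(\frac{M^{4}c\,c_{1}c_{2}}{r^{5/4}}\|\mathbf{u}\|_{\mathbb{H},1}^{2}-\frac{\nu\alpha}{r}\|\mathbf{u}\|_{\mathbb{H},1}+f\Bigr)\|\mathbf{u}\|_{\mathbb{H},1}.
\]
The quadratic in $\|\mathbf{u}\|_{\mathbb{H},1}$ in parentheses is $\le 0$ exactly on the interval between its roots, and those roots are real and positive precisely when $f\le (\nu\alpha)^{2}/\bigl(4M^{4}c\,c_{1}c_{2}\,r^{3/4}\bigr)$. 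Taking $u_{-}$ to be the smaller root and $u_{+}$ the smaller of the larger root and $\nu\alpha r^{1/4}/(4M^{4}c\,c_{1}c_{2})$ then gives an interval $[u_{-},u_{+}]$ — nonempty once $f$ lies below a fixed fraction of that threshold — on which both the strong-dissipativity estimate and the absorption estimate hold, with $u_{\pm}$ and $\delta$ depending only on $\nu$, $\mathbf{A}$ (through $r$, $\alpha$, $M$ and the $c_{i}$) and $f$. Below $u_{-}$ the $f$-term would overwhelm the dissipation in this estimate, which is exactly why the lower cutoff appears in the statement.

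The genuine obstacle is not any single inequality — each is mechanical given Theorem 6 and Lemma 7 — but the simultaneous bookkeeping of the powers of the renorming parameter $r$: the amplified Stokes dissipation scales like $r^{-1}$, the nonlinear contribution like $r^{-5/4}$, and the admissible size of $f$ like $r^{-3/4}$, so one must confirm that $r$ can be fixed so as to make all three requirements compatible and $[u_{-},u_{+}]$ of positive width — in particular that the constant $\alpha=\alpha(r)$ supplied by Lemma 7 does not degenerate as $r$ is increased. It is precisely the extra factor $r^{-1/4}$ that Theorem 6 gains over the linear term that makes this balance achievable, and this is the point at which the equivalent-norm construction does its work.
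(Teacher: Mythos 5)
Your proposal is correct and follows essentially the same route as the paper: the reverse Poincar\'{e} inequality of Lemma 7 played against the trilinear bound of Theorem 6, the bilinear splitting $C(\mathbf{u},\mathbf{u})-C(\mathbf{v},\mathbf{v})=C(\mathbf{w},\mathbf{u})+C(\mathbf{v},\mathbf{w})$, and the same quadratic in $\left\|\mathbf{u}\right\|_{\mathbb{H},1}$ whose roots define $u_{\pm}$ under the discriminant condition $\gamma<1$. The only cosmetic difference is that the paper takes $u_{+}$ to be the larger root itself and then proves the difference estimate on the ball of radius $\tfrac{1}{2}u_{+}$ with $\delta=\tfrac{\nu\alpha}{2r}\{1-\sqrt{1-\gamma}\}$, whereas you cap $u_{+}$ below that root so that strong dissipativity holds on the full ball of radius $u_{+}$ with $\delta=\tfrac{\nu\alpha}{2r}$.
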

\begin{proof} The proof of our first assertion has two parts. First, we require that the nonlinear operator $ {\mathcal{A}}( \cdot ,t)$
 be 0-dissipative, which gives us an upper bound ${{u}}_ +  $ and lower bound ${{u}}_ -  $ in terms of the norm (i.e., $\left\| {\mathbf{u}} \right\|_{\mathbb{H},1}  \leqslant {{u}}_ + $ ).  We then use this part to show that $ {\mathcal{A}}( \cdot ,t)$ is strongly dissipative on any closed convex  ball, $ \mathbb{D} $ inside the annulus defined by$ \left\{ {{\mathbf{u}} \in D({\bf{A}}):0 \le {{u}}_ -  \le \left\| {\mathbf{u}} \right\|_{\mathbb{H},1}  \leqslant \tfrac{1}{2} {{u}}_ +  } \right\}$.  

Part 1) 
From equation (7), we consider the expression
\[
\begin{gathered}
  \left\langle {{\mathcal A}({\mathbf{u}},t),{\mathbf{u}}} \right\rangle _{\Ha,1}  =  - \nu \left\langle {{\mathbf{Au}},{\mathbf{u}}} \right\rangle _{\Ha,1}  + \left\langle {\left[ { - C({\mathbf{u}},{\mathbf{u}}) + \mathbb{P}{\mathbf{f}}} \right],{\mathbf{u}}} \right\rangle _{\Ha,1}  \hfill \\
   =  - \nu \left\| {{\mathbf{A}}^{1/2} {\mathbf{u}}} \right\|_{\Ha,1}^2  - \left\langle {C({\mathbf{u}},{\mathbf{u}}),{\mathbf{u}}} \right\rangle _{\Ha,1}  + \left\langle {\mathbb{P}{\mathbf{f}},{\mathbf{u}}} \right\rangle _{\Ha,1}.  \hfill \\ 
\end{gathered} 
\]
We now use the fact that 
\[
\tfrac{\alpha }
{r}\left\| {\mathbf{u}} \right\|_{\Ha,1}^2  \leqslant \left\| {{\mathbf{A}}^{1/2} {\mathbf{u}}} \right\|_{\Ha,1}^2  \Rightarrow  - \nu \tfrac{\alpha }
{r}\left\| {\mathbf{u}} \right\|_{\Ha,1}^2  \geqslant  - \nu \left\| {{\mathbf{A}}^{1/2} {\mathbf{u}}} \right\|_{\Ha,1}^2 
\]
to get that
\beqn
  \left\langle {{\mcA}({\mathbf{u}},t),{\mathbf{u}}} \right\rangle _{\mathbb{H},1}  \le   - \nu \tfrac{\alpha }
{r}\left\| {\mathbf{u}} \right\|_{\Ha,1}^2  +  \frac{{M^4 cc_1 c_2 }}
{{r^{5/4}  }}\left\| {{\mathbf{u}}} \right\|_{\mathbb{H},1}^3  +  f \left\| { {\mathbf{u}}} \right\|_{\mathbb{H},1}.  
\eeqn 
In the last line, we used our estimate from Theorem 6.

Since $\left\| {\mathbf{u}} \right\|_{\mathbb{H},1}  > 0$, we have that ${\mcA}( \cdot ,t)$ is 0-dissipative if
\[
 - \frac{{\nu \alpha }}
{r}\left\| {\mathbf{u}} \right\|_{H,1}  + \frac{{M^4 cc_1 c_2 }}
{{r^{5/4} }}\left\| {\mathbf{u}} \right\|_{H,1}^2  + f \leqslant 0
\]
Solving the equality, we get that
\beqa
{{u}}_ \pm   = \tfrac{\nu \al r^{1/4}}
{2M^4 c c_1c_2} \left\{ {1 \pm \sqrt {1 - ({{4fM^4cc_1c_2 )} \mathord{\left/
 {\vphantom {{4fM^4cc_1c_2)} {(\nu ^2 r^{1/2} \al^2 )}}} \right.
 \kern-\nulldelimiterspace} {(\nu ^2 r^{1/2} \al^2 )}}} } \right\} = \tfrac{\nu \al r^{1/4}}
{2M^4 c c_1c_2} \left\{ {1 \pm \sqrt {1 - \gamma } } \right\},
\eeqa
where $
\gamma  = ({{4r^{3/4}fM^4cc_1c_2 )} \mathord{\left/
 {\vphantom {{4r^{3/4}fM^4cc_1c_2 )} {(\nu ^2 \al^2  )}}} \right.
 \kern-\nulldelimiterspace} {(\nu ^2 \al^2 )}}.
$
Since we want real distinct solutions, we must require that 
\[
\gamma  = \frac{{4r^{3/4}fM^4cc_1c_2 }}
{{\nu ^2 \al^2}} < 1\quad  \Rightarrow \f{2M^2r^{3/8}}{\al}\left[ fcc_1c_2 \right]^{1/2}  < \nu. 
\]
It follows that, if $\mathbb{P}{\mathbf{f}} \ne {\mathbf{0}}$, then 
$
{{u}}_ -   < {{u}}_ + $ , and our requirement that $\mcA({\mathbf{u}},t)$ is 0-dissipative implies that, since our solution factors as 
$
(\left\| {\mathbf{u}} \right\|_{\mathbb{H},1}  - {{u}}_ +  )(\left\| {\mathbf{u}} \right\|_{\mathbb{H},1}  - {{u}}_ -  ) \le 0,
$
we must have that:
\beqa
\left\| {\mathbf{u}} \right\|_{\mathbb{H},1}  - {{u}}_ +   \le 0,{\text{  }}\left\| {\mathbf{u}} \right\|_{\mathbb{H},1}  - {{u}}_ -   \ge 0.
\eeqa
 It follows that, for  
${{u}}_ -   \le \left\| {\mathbf{u}} \right\|_{\mathbb{H},1}  \le {{u}}_ + $, 
$
\left\langle {\mcA({\mathbf{u}},t),{\mathbf{u}}} \right\rangle _{\mathbb{H},1}  \le 0$.  (It is clear that, when $
\mathbb{P}{\mathbf{f}}(t) = {\mathbf{0}}, \; {{u}}_ -   = {{0}}$, and ${{u}}_ +   = \tfrac{\nu \al r^{1/4}}
{M^4 c c_1c_2} $.)

Part 2): Now, for any ${\mathbf{u}},{\mathbf{v}} \in D({\mathbf{A}})$  with ${\mathbf{u}}-{\mathbf{v}} \in D({\mathbf{A}})$ and 
\[
u_- \le \min ({\text{ }}\left\| {\mathbf{u}} \right\|_{\mathbb{H},1} ,\left\| {\mathbf{v}} \right\|_{\mathbb{H},1} ) \le \max ({\text{ }}\left\| {\mathbf{u}} \right\|_{\mathbb{H},1} ,\left\| {\mathbf{v}} \right\|_{\mathbb{H},1} ) \le (1/2){{{u}}_ +},
\]
 we have that   
\beqa
\begin{gathered}
  \left\langle {{\mcA}({\mathbf{u}},t) - {\mcA}({\mathbf{v}},t),({\mathbf{u}} - {\mathbf{v}})} \right\rangle _{\mathbb{H},1}  =  -\nu \left\| {{\bf{A}}^{1/2} ({\mathbf{u}} - {\mathbf{v}})} \right\|_{\mathbb{H},1}^2  \hfill \\ 
  {\text{                                                    }} -  \left\langle { [{{C}}({\mathbf{u}},{\mathbf{u}} - {\mathbf{v}}) + {{C}}({\mathbf{v}}, {\mathbf{u-v}})],({\mathbf{u}} - {\mathbf{v}})} \right\rangle _{\mathbb{H},1}  \hfill \\
  {\text{                    }} \leqslant - \tfrac{\nu \al}{r} \left\| {{\mathbf{u}} - {\mathbf{v}}} \right\|_{\mathbb{H},1}^2 +  [1/(r^{5/4})]M^4 c{ c_1}c_2 \left\| {{\mathbf{u}} - {\mathbf{v}}} \right\|_{\mathbb{H},1}^2 \left( {\left\| {\mathbf{u}} \right\|_{\mathbb{H},1}  + \left\| {\mathbf{v}} \right\|_{\mathbb{H},1} } \right) \hfill \\
  {\text{                    }} \le - \tfrac{\nu \al}{r} \left\| {{\mathbf{u}} - {\mathbf{v}}} \right\|_{\mathbb{H},1}^2  + [1/(r^{5/4})]M^4 c{ c_1}c_2  \left\| {{\mathbf{u}} - {\mathbf{v}}} \right\|_{\mathbb{H},1}^2 {{u}}_ +   \hfill \\
  {\text{                    }} =  -  \tfrac{\nu \al}{r}  \left\| {{\mathbf{u}} - {\mathbf{v}}} \right\|_{\mathbb{H},1}^2  +  [1/(r^{5/4})]M^4 c{ c_1}c_2  \left\| {{\mathbf{u}} - {\mathbf{v}}} \right\|_{\mathbb{H},1}^2 \left( \tfrac{\nu \al r^{1/4}}{2M^4 c c_1c_2} \left\{ {1 + \sqrt {1 - \gamma } } \right\} \right) \hfill \\
  {\text{                    }} =  - \tfrac{\nu \al}{2r} \left\| {{\mathbf{u}} - {\mathbf{v}}} \right\|_{\mathbb{H},1}^2 \left\{ {1 - \sqrt {1 - \gamma } } \right\} \hfill \\
  {\text{                    }} =  - \de \left\| {{\mathbf{u}} - {\mathbf{v}}} \right\|_{\mathbb{H},1}^2 ,{\text{   }} \de = \tfrac{\nu \al}{2r}  \left\{ {1 - \sqrt {1 - \gamma } } \right\}. \hfill \\ 
\end{gathered} 
\eeqa
\end{proof} 
Let $\D$ be any closed convex set (in the graph norm of $\bf A$) inside the annulus bounded by $\tfrac{1}{2}{{u}}_+$ and ${{u}}_{-}$.
\begin{thm} The operator ${\mathcal{A}}(\cdot,t) $
 is closed, strongly dissipative and jointly continuous in ${\mathbf{u}}$ and $t$.  Furthermore, for each $t \in {\mathbf{R}}^ +  $ and $\beta  > 0$, 
$Ran[I - \beta  {\mathcal{A}}(t)] \supset \D$, so that $
 {\mathcal{A}}(t)$ is m-dissipative on $\D$. 
\end{thm}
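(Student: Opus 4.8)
The plan is to dispatch the three structural properties by reduction to results already proved, and then to put the real effort into the range condition. \emph{Strong dissipativity} on $\mathbb{D}$ is exactly Part~2 of the preceding theorem, which gives $\langle\mathcal{A}(\mathbf{u},t)-\mathcal{A}(\mathbf{v},t),\mathbf{u}-\mathbf{v}\rangle_{\mathbb{H},1}\le-\delta\|\mathbf{u}-\mathbf{v}\|_{\mathbb{H},1}^{2}$ with $\delta=\tfrac{\nu\alpha}{2r}\{1-\sqrt{1-\gamma}\}>0$, so nothing new is needed. For \emph{closedness}, write $\mathcal{A}(\mathbf{u},t)=-\nu\mathbf{A}\mathbf{u}-C(\mathbf{u},\mathbf{u})+\mathbb{P}\mathbf{f}(t)$: here $-\nu\mathbf{A}$ is closed (selfadjoint, by the Stokes theorem quoted above), and the renormed estimate of Theorem~6 together with the identity $C(\mathbf{u},\mathbf{u})-C(\mathbf{v},\mathbf{v})=C(\mathbf{u}-\mathbf{v},\mathbf{u})+C(\mathbf{v},\mathbf{u}-\mathbf{v})$ shows $\mathbf{u}\mapsto C(\mathbf{u},\mathbf{u})$ is Lipschitz on the bounded set $\mathbb{D}$ (constant $\le\tfrac{M^{4}cc_{1}c_{2}}{r^{5/4}}{u}_{+}$), hence $\|\cdot\|_{\mathbb{H},1}$-continuous and bounded there; a closed operator plus a continuous map on its domain is closed, so if $\mathbf{u}_{n}\to\mathbf{u}$ in the graph norm of $\mathbf{A}$ with $\mathcal{A}(\mathbf{u}_{n},t)\to\mathbf{g}$, then $-\nu\mathbf{A}\mathbf{u}_{n}=\mathcal{A}(\mathbf{u}_{n},t)+C(\mathbf{u}_{n},\mathbf{u}_{n})-\mathbb{P}\mathbf{f}(t)$ converges, closedness of $\mathbf{A}$ forces $\mathbf{u}\in D(\mathbf{A})$, the limit relation is $\mathcal{A}(\mathbf{u},t)=\mathbf{g}$, and $\mathbf{u}\in\mathbb{D}$ since $\mathbb{D}$ is closed. \emph{Joint continuity} follows the same way: continuity in $\mathbf{u}$ (graph norm) from the Lipschitz bound on $C$ plus continuity of $\mathbf{A}$, and continuity in $t$ uniform on $\mathbb{D}$ from Lemma~8, i.e. $\|\mathbb{P}\mathbf{f}(t)-\mathbb{P}\mathbf{f}(\tau)\|_{\mathbb{H},1}\le d|t-\tau|^{\theta}$.

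For the range condition, fix $t$, $\beta>0$, $\mathbf{b}\in\mathbb{D}$ and recast $\mathbf{u}-\beta\mathcal{A}(\mathbf{u},t)=\mathbf{b}$ as the fixed-point problem $\mathbf{u}=\mathcal{R}_{\beta}(\mathbf{u}):=(I+\beta\nu\mathbf{A})^{-1}\bigl[\mathbf{b}+\beta C(\mathbf{u},\mathbf{u})-\beta\mathbb{P}\mathbf{f}(t)\bigr]$ on the closed ball $\mathbb{B}$. I would show $\mathcal{R}_{\beta}$ is a self-map of $\mathbb{B}\cap D(\mathbf{A})$ and a strict contraction for \emph{every} $\beta>0$, using: (i) $(I+\beta\nu\mathbf{A})^{-1}$ commutes with $\mathbf{S}$ and, by the extended Poincar\'{e} inequality (Lemma~7) in the form $\mathbf{A}\ge\tfrac{\alpha}{r}I$ in the renormed inner product, satisfies $\|(I+\beta\nu\mathbf{A})^{-1}\|_{\mathbb{H},1}\le(1+\beta\nu\alpha/r)^{-1}$; (ii) the Lipschitz bound on $C$ above, giving $\mathrm{Lip}(\mathcal{R}_{\beta})\le\beta\tfrac{M^{4}cc_{1}c_{2}}{r^{5/4}}{u}_{+}/(1+\beta\nu\alpha/r)$, an increasing function of $\beta$ with supremum $\tfrac{M^{4}cc_{1}c_{2}{u}_{+}}{\nu\alpha r^{1/4}}=\tfrac12\{1+\sqrt{1-\gamma}\}<1$; (iii) the $0$-dissipativity relation of Part~1 at $\|\mathbf{u}\|_{\mathbb{H},1}={u}_{+}$, namely $\tfrac{M^{4}cc_{1}c_{2}}{r^{5/4}}{u}_{+}^{2}=\tfrac{\nu\alpha}{r}{u}_{+}-f$, which combined with $\|C(\mathbf{u},\mathbf{u})\|_{\mathbb{H},1}\le\tfrac{M^{4}cc_{1}c_{2}}{r^{5/4}}(\tfrac12{u}_{+})^{2}$ and $\|\mathbf{b}\|_{\mathbb{H},1}\le\tfrac12{u}_{+}$ gives $\|\mathcal{R}_{\beta}(\mathbf{u})\|_{\mathbb{H},1}\le\tfrac12{u}_{+}$ — the computation reducing to $\tfrac{3\gamma}{2}\le1+\sqrt{1-\gamma}$, i.e. $\gamma\le\tfrac89$, which is exactly the condition that the annulus $[{u}_{-},\tfrac12{u}_{+}]$ be nonempty. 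The Banach fixed-point theorem then produces a unique $\mathbf{u}\in\mathbb{B}\cap D(\mathbf{A})$ with $\mathbf{u}-\beta\mathcal{A}(\mathbf{u},t)=\mathbf{b}$, so $Ran[I-\beta\mathcal{A}(t)]\supset\mathbb{D}$ for all $t$ and $\beta>0$; equivalently, one could deduce this from Browder's theorem (Theorem~2), since $\mathbf{u}\mapsto\mathbf{u}-\beta\mathcal{A}(\mathbf{u},t)-\mathbf{b}$ is strongly monotone on $\mathbb{D}$ and, by the $0$-dissipativity of Part~1, points outward on $\|\mathbf{u}\|_{\mathbb{H},1}=\tfrac12{u}_{+}$.

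Putting the four items together, $\mathcal{A}(\cdot,t)$ is closed, dissipative, and satisfies $Ran[I-\beta\mathcal{A}(t)]\supset\mathbb{D}$ for every $\beta>0$, which is m-dissipativity on $\mathbb{D}$; with the joint continuity this is exactly what Theorems~2 and~3 require in the Main Results. I expect the one genuine obstacle to be step~(iii): the invariance of $\mathbb{B}$ under $\mathcal{R}_{\beta}$ — equivalently, that the resolvent $(I-\beta\mathcal{A}(t))^{-1}$ maps $\mathbb{D}$ back into $\mathbb{D}$. This is the single place where the choice of radius $\tfrac12{u}_{+}$ rather than ${u}_{+}$ and the $0$-dissipativity from Part~1 of the previous theorem are used essentially, and it is the estimate most sensitive to the size of $\gamma$; everything else is a routine perturbation of the closed linear operator $-\nu\mathbf{A}$ controlled by the bilinear bound of Theorem~6.
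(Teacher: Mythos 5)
Your proposal is correct, and on three of the four items it coincides with the paper's proof: strong dissipativity is quoted from Part 2 of the preceding theorem, joint continuity is obtained from exactly the same three estimates (the H\"older bound on $\mathbb{P}\mathbf{f}$, the bound $\left\|\mathbf{A}\mathbf{u}\right\|_{\mathbb{H},1}\le\tfrac{Mc_3}{r}\left\|\mathbf{u}\right\|_{\mathbb{H},1}$, and the bilinear bound of Theorem 6 applied to $C(\mathbf{u}_n-\mathbf{u},\mathbf{u}_n)+C(\mathbf{u},\mathbf{u}_n-\mathbf{u})$), and closedness, which the paper dismisses as ``easy to see,'' is the same perturbation argument you spell out. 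Where you genuinely diverge is the range condition: the paper disposes of it in one line --- ``strongly dissipative, hence maximal dissipative, hence $Ran[I-\beta\mathcal{A}(\cdot,t)]\supset\mathbb{D}$'' --- which is an implicit appeal to its Theorem 2 (Browder) and to the Hilbert-space fact that maximal dissipative equals m-dissipative; you instead solve $\mathbf{u}-\beta\mathcal{A}(\mathbf{u},t)=\mathbf{b}$ constructively by a Banach fixed-point iteration on the resolvent map $\mathcal{R}_\beta=(I+\beta\nu\mathbf{A})^{-1}[\mathbf{b}+\beta C(\cdot,\cdot)-\beta\mathbb{P}\mathbf{f}(t)]$. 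Your route costs more computation but buys two things the paper's one-liner does not supply: a uniform-in-$\beta$ contraction constant $\tfrac12\{1+\sqrt{1-\gamma}\}<1$, and --- more importantly --- the verification that the resolvent actually maps the ball of radius $\tfrac12 u_+$ back into itself, with the pleasant bookkeeping that this invariance holds precisely when $\gamma\le\tfrac89$, i.e.\ precisely when $u_-\le\tfrac12 u_+$ and the set $\mathbb{D}$ is nonempty. That invariance is needed for the generation theorem (Theorem 3) to run on $\mathbb{D}$ and is nowhere checked in the paper, so your step (iii) is not merely an alternative proof but fills a quantitative point the paper leaves implicit. The one caveat is that your fixed point lands in $\mathbb{B}\cap D(\mathbf{A})$ rather than in the annulus $u_-\le\left\|\mathbf{u}\right\|_{\mathbb{H},1}\le\tfrac12 u_+$ when $\mathbf{f}\ne\mathbf{0}$, so strictly you prove $Ran[I-\beta\mathcal{A}(t)]\supset\mathbb{D}$ with preimages in the ball, which is all the theorem asserts but worth stating explicitly.
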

\begin{proof} 
 It is easy to see that ${\mcA}( \cdot ,t)$ is closed. Since ${\mcA}( \cdot ,t)$ is strongly dissipative, it is maximal dissipative, so that $Ran[I - \beta  {\mathcal{A}}(\cdot, t)] \supset \D$.  It follows that ${\mcA}( \cdot ,t)$ is m-dissipative on $\D$ for each $t \in {\mathbf{R}}^ +  $ (since $\mathbb{H}$ is a Hilbert space).
To see that ${\mcA}( {\bf u} ,t)$ is continuous in both variables, let ${\mathbf{u}}_n ,{\mathbf{u}} \in \mathbb{B}$, $\left\| ({\mathbf{u}}_n  - {\mathbf{u}}) \right\|_{\mathbb{H},1}   \to 0$,
 with $t_n ,t \in I$ and $t_n  \to t$.  Then, if  $\left\| {{\mathbf{Au}}} \right\|_{\Ha,1}  \leqslant \tfrac{{Mc_3 }}{r}\left\| {\mathbf{u}} \right\|_{\Ha,1} $, we have
\beqa
\begin{gathered}
  \left\| { {\mathcal{A}}({\mathbf{u}}_n, t_n )  -  {\mathcal{A}}({\mathbf{u}}, t)} \right\|_{\mathbb{H},1}  \leqslant \left\| { {\mathcal{A}}({\mathbf{u}}, t_n ) -  {\mathcal{A}}({\mathbf{u}}, t) }\right\|_{\mathbb{H},1}   + \left\| { {\mathcal{A}}({\mathbf{u}}_n, t_n )  -  {\mathcal{A}}({\mathbf{u}}, t_n )} \right\|_{\mathbb{H},1}   \hfill \\
   = \left\| {{\text{[}}\mathbb{P}{\mathbf{f}}(t_n ) - \mathbb{P}{\mathbf{f}}(t)]} \right\|_{\mathbb{H},1}   + \left\| {\nu {\mathbf{A}}({\mathbf{u}}_n  - {\mathbf{u}}) + [{{C}}({\mathbf{u}}_n  - {\mathbf{u}},{\mathbf{u}}_n ) + {{C}}( {\mathbf{u}},{\mathbf{u}}_n  -{\mathbf{u}})]} \right\|_{\mathbb{H},1}   \hfill \\
   \leqslant d\left| {t_n  - t} \right|^\theta   + \nu \left\| {{\mathbf{A}}({\mathbf{u}}_n - {\mathbf{u}})} \right\|_{\mathbb{H},1}   + \left\| {{{C}}({\mathbf{u}}_n  - {\mathbf{u}},{\mathbf{u}}_n ) + {{C}}( {\mathbf{u}},{\mathbf{u}}_n  - {\mathbf{u}})} \right\|_{\mathbb{H},1}   \hfill \\
   \leqslant d\left| {t_n  - t} \right|^\theta   + \nu \tfrac{ Mc_3}{r} \left\| {({\mathbf{u}}_n  - {\mathbf{u}})} \right\|_{\mathbb{H},1}  +  \tfrac{ M^4cc_1c_2}{r^{5/4}} \left\| {({\mathbf{u}}_n  - {\mathbf{u}})} \right\|_{\mathbb{H},1} \left\{ {\left\| {{\mathbf{u}}_n } \right\|_{\mathbb{H},1}   + \left\| {{\mathbf{u}}} \right\|_{\mathbb{H},1}  } \right\} \hfill \\
   \leqslant d\left| {t_n  - t} \right|^\theta   + \nu \tfrac{M c_3}{r} \left\| {({\mathbf{u}}_n  - {\mathbf{u}})} \right\|_{\mathbb{H},1}   +  
\tfrac{M^4 cc_1c_2}{r^{5/4}} \left\| {({\mathbf{u}}_n  - {\mathbf{u}})} \right\|_{\mathbb{H},1}  {{u}}_+ . \hfill \\ 
\end{gathered} 
\eeqa
It follows that $ {\mathcal{A}}({\mathbf{u}}, t)$ is continuous in both variables.
\end{proof}
When $\mathbf{f}=\mathbf{0},\; \mathbb{D}$ is the graph closure of ${D}({\mathbf{A}}) \cap {\mathbb{B}}_+$ in the  $\mathbb H$ norm, where ${\mathbb{B}}_+$ is the ball of radius 
$\tf{1}{2} u_{+}$.  In this case, it follows that $\mathbb{D}$ is a closed, bounded, convex set.  We now have: 
\begin{thm} For each $T \in {\mathbf{R}}^ +$, $t \in (0,T)$ and ${\mathbf{u}}_0  \in \mathbb{D} $, the global-in-time Navier-Stokes initial-value problem in $\mathbb{R}^3 :$
\beqn
\begin{gathered}
  \partial _t {\mathbf{u}} + ({\mathbf{u}} \cdot \nabla ){\mathbf{u}} - \nu \Delta {\mathbf{u}} + \nabla p = {\mathbf{0}}{\text{ in (}}0,T) \times \mathbb{R}^3  , \hfill \\
  {\text{                              }}\nabla  \cdot {\mathbf{u}} = 0{\text{ in (}}0,T) \times \mathbb{R}^3  , \hfill \\
  {\text{                              }}\mathop {\lim }\limits_{\left\| {\mathbf{x}} \right\| \to \infty }{\mathbf{u}}(t,{\mathbf{x}}) = {\mathbf{0}}{\text{ on (}}0,T) \times \mathbb{R}^3  , \hfill \\
  {\text{                              }}{\mathbf{u}}(0,{\mathbf{x}}) = {\mathbf{u}}_0 ({\mathbf{x}}){\text{ in }}\mathbb{R}^3,  \hfill \\ 
\end{gathered} 
\eeqn
 has a unique strong solution ${\mathbf{u}}(t,{\mathbf{x}})$, which is in
 ${L_{\text{loc}}^2}[[0,\infty); {\mathbb {H}}]$ and in
$L_{\text{loc}}^\infty[[0,\infty); {\mathbb V}]
\cap \mathbb{C}^1[(0,\infty);{\mathbb H}]$.
\end{thm}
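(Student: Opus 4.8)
The plan is to read equation (9), via its Leray-projected pressure-free form (7) with $\mathbf{f}=\mathbf{0}$, as the \emph{autonomous} abstract Cauchy problem $\partial_t\mathbf{u}=\mathcal{A}(\mathbf{u})$, $\mathbf{u}(0)=\mathbf{u}_0$, where $\mathcal{A}(\mathbf{u})=-\nu\mathbf{A}\mathbf{u}-C(\mathbf{u},\mathbf{u})$ on $D=D(\mathbf{A})$, and to apply Theorem 3 with $\mathbb{D}$ the graph-closure of $D(\mathbf{A})\cap\mathbb{B}_+$ described just before the statement ($\mathbb{B}_+$ being the ball of radius $\tfrac{1}{2}u_+$). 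The whole proof then amounts to: (i) checking the hypotheses of Theorem 3; (ii) transferring the abstract solution back to the PDE (9); and (iii) reading off the stated regularity.

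For (i): that $\mathbb{D}$ is a closed, bounded, convex set in the graph norm of $\mathbf{A}$ --- the topology relevant to Theorem 3 --- is the remark preceding the statement, and since $\|\cdot\|_{\mathbb{H},1}$ is equivalent to $\|\cdot\|_{\mathbb{H}}$ this costs nothing. Hypothesis (2) of Theorem 3 (joint continuity of $\mathcal{A}(\mathbf{u},t)$ on $\mathbb{D}\times I$) is part of Theorem 10: it is trivial in $t$ when $\mathbf{f}=\mathbf{0}$, and locally Lipschitz in $\mathbf{u}$ on $\mathbb{D}$ by equation (6) of Theorem 6 together with the graph bound $\|\mathbf{A}\mathbf{u}\|_{\mathbb{H},1}\le(Mc_3/r)\|\mathbf{u}\|_{\mathbb{H},1}$ used there. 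Hypothesis (1) (generation of a contraction semigroup) also follows from Theorem 10: $\mathcal{A}(\cdot,t)$ is closed and strongly --- hence maximal --- dissipative on $\mathbb{D}$, and satisfies $Ran[I-\beta\mathcal{A}(t)]\supset\mathbb{D}$ for every $\beta>0$, which is exactly the range condition that the Crandall--Liggett/Miyadera construction \cite{CL} needs to generate a nonlinear contraction semigroup on $\mathbb{D}$. Theorem 3 then produces, for each $\mathbf{u}_0\in\mathbb{D}$, a \emph{unique} $\mathbf{u}\in\mathbb{C}^1(I;\mathbb{D})$ solving the abstract problem; in particular $\mathbf{u}(t)\in\mathbb{D}$ for all $t$, and since $\mathbf{f}=\mathbf{0}$ forces $u_-=0$, Theorem 9 applied with $\mathbf{v}=\mathbf{0}$ gives $\tfrac{d}{dt}\|\mathbf{u}(t)\|_{\mathbb{H},1}^2=2\langle\mathcal{A}(\mathbf{u}),\mathbf{u}\rangle_{\mathbb{H},1}\le 0$, so the trajectory never leaves $\mathbb{D}$ and the solution is global.

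For (ii)--(iii): equivalence of the two norms gives $\mathbb{C}^1(I;\mathbb{D})\subset\mathbb{C}^1((0,\infty);\mathbb{H})$, and boundedness of $\mathbb{D}$ in $\mathbb{H}$ gives $\mathbf{u}\in L^2_{\mathrm{loc}}[[0,\infty);\mathbb{H}]$ immediately. Because $\mathbb{D}\subset D(\mathbf{A})\subset D(\mathbf{A}^{1/2})=\mathbb{V}$ and $\mathbb{D}$ is bounded in the graph norm of $\mathbf{A}$, the quantity $\|\mathbf{A}^{1/2}\mathbf{u}(t)\|_{\mathbb{H}}$ stays bounded on $I$, so $\mathbf{u}\in L^\infty_{\mathrm{loc}}[[0,\infty);\mathbb{V}]$. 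Finally, $\mathbf{u}$ satisfies $\partial_t\mathbf{u}+\nu\mathbf{A}\mathbf{u}+C(\mathbf{u},\mathbf{u})=\mathbf{0}$ in $\mathbb{H}$; since the orthogonal complement of $\mathbb{H}_0$ in $(L^2(\R^3))^3$ is $\{\nabla q:\,q\in(H^1)^3\}$, one recovers a scalar $p$ from the curl-free component via $(\mathbf{u}\cdot\nabla)\mathbf{u}-\nu\Delta\mathbf{u}+\partial_t\mathbf{u}=-\nabla p$, so $(\mathbf{u},p)$ solves (9); the weak condition $\nabla\cdot\mathbf{u}=0$ and the decay $\lim_{\|\mathbf{x}\|\to\infty}\mathbf{u}=\mathbf{0}$ are automatic from membership in $\mathbb{H}_0\cap D(\mathbf{A})$ (using $\mathbb{H}_0^k[\R^3]=\mathbb{H}^k[\R^3]$). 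Uniqueness for (9) descends from uniqueness in Theorem 3, indeed from $\delta>0$, which makes the solution map a strict contraction.

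The main obstacle is not the existence step --- that is a direct appeal to Theorems 3, 9 and 10 --- but making the dictionary between the abstract solution and a genuine strong solution of (9) airtight: one must confirm that the Leray decomposition really produces a pressure with $\nabla p\in(L^2(\R^3))^3$, that the behaviour at infinity and the weak incompressibility are forced by the function spaces alone, and that the analyticity of $T(t)$ supplies the $t\to 0^+$ smoothing which lets the $\mathbb{C}^1$ conclusion be stated on $(0,\infty)$ rather than $[0,\infty)$. Given the earlier development, these are routine, so the proof is essentially an assembly of the preceding theorems.
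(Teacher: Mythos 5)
Your proposal follows the same route as the paper: the paper's proof simply invokes Theorem 3 (with the hypotheses supplied by Theorems 9 and 10 and the remark that $\mathbb{D}$ is closed, bounded and convex) to get $\mathbf{u}\in\mathbb{C}^1[(0,\infty);\mathbb{D}]$, and then reads off the $L^2_{\mathrm{loc}}$ and $L^\infty_{\mathrm{loc}}$ statements from $\mathbb{D}\subset\mathbb{H}^2$ and the boundedness of $\mathbb{D}$. Your additional details (invariance of $\mathbb{D}$ via dissipativity, recovery of the pressure from the Leray decomposition) are elaborations of steps the paper leaves implicit, not a different argument.
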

\begin{proof}
Theorem 3 allows us to conclude that, when ${\mathbf{u}}_0  \in \mathbb{D}$, the initial value problem is solved and the solution ${\mathbf{u}}(t,{\mathbf{x}})$ is in $\mathbb{C}^1[(0,\infty);{\mathbb D}]$.  Since $\mathbb{D} \subset \mathbb{H}^{2}$, it follows that ${\mathbf{u}}(t,{\mathbf{x}})$ is also in $\mathbb{ V}$ for each $t>0$.  It is now clear that, for any $T>0$,
\[
\int_0^T {\left\| {{\mathbf{u}}(t,{\mathbf{x}})} \right\|_{\mathbb{H}}^2 dt}  < \infty ,{\text{ and }}\sup _{0 < t < T} \left\| {{\mathbf{u}}(t,{\mathbf{x}})} \right\|_{\mathbb{V}}^2  < \infty .
\]
This gives our conclusion.
\end{proof}
When $\mathbf{f} \ne \mathbf{0}$, ${u}_{-} \ne {0}$.  Let $\Bbbk  = \left\{ {{\mathbf{u}}\;:\;\left\| {\mathbf{u}} \right\|_{{\mathbb{H}},1}  < {{u}}_ - \; \&, \; \left\| {\mathbf{u}} \right\|_{{\mathbb{H}},1} > \tf{1}{2}u_+  } \right\}$ and set $\mathbb{B}_ -   = \mathbb{B}   \cap \Bbbk ^c$, where $\Bbbk ^c$ is the complement of $\Bbbk $.  We can now take the graph closure of $\mathbb{B}{_{-}} \cap D({\bf A})$ and use the largest closed convex set containing the initial data inside this set.

\section*{Discussion}
 It is known that, if ${\mathbf{u}_0} \in \mathbb{V}$ and $\mathbf{f}(t) \in L^{\infty}[(0,\infty), \mathbb{H}]$, then there is a time $T> 0$ such that a weak solution with this data is uniquely determined on any subinterval of $[0,T)$ (see Sell and You, page 396, \cite{SY}).   Thus, we also have that: 
\begin{cor} For each $t \in {\mathbf{R}}^ + $ and $
{\mathbf{u}}_0  \in \mathbb{D} $ the Navier-Stokes initial-value problem on $ \mathbb{R}^3 :$
\beqn
\begin{gathered}
  \partial _t {\mathbf{u}} + ({\mathbf{u}} \cdot \nabla ){\mathbf{u}} - \nu \Delta {\mathbf{u}} + \nabla p = {\mathbf{f}}(t){\text{ in (}}0,T) \times \mathbb{R}^3  , \hfill \\
  {\text{                              }}\nabla  \cdot {\mathbf{u}} = 0{\text{ in (}}0,T) \times \mathbb{R}^3  , \hfill \\
  {\text{                              }}\mathop {\lim }\limits_{\left\| {\mathbf{x}} \right\| \to \infty }{\mathbf{u}}(t,{\mathbf{x}}) = {\mathbf{0}}{\text{ on (}}0,T) \times \mathbb{R}^3  , \hfill \\
  {\text{                              }}{\mathbf{u}}(0,{\mathbf{x}}) = {\mathbf{u}}_0 ({\mathbf{x}}){\text{ in }}\mathbb{R}^3,   \hfill \\ 
\end{gathered} 
\eeqn
 has a unique weak solution
${\mathbf{u}}(t,{\mathbf{x}})$ which is in
 ${L_{\text{loc}}^2}[[0,\infty); {\mathbb {H}^2}]$ and in
$L_{\text{loc}}^\infty[[0,\infty); {\mathbb V}]
\cap \mathbb{C}^1[(0,\infty);{\mathbb H}]$.
\end{cor}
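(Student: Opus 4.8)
\emph{Proof sketch.} The plan is to follow the proof of Theorem 11 almost verbatim, adding one ingredient---the local weak-uniqueness result quoted in the Discussion---to promote the strong solution delivered by the abstract theory to \emph{the} unique weak solution. Fix $\mathbf{u}_0\in\mathbb{D}$, where $\mathbb{D}$ is the closed convex set (in the graph norm of $\mathbf{A}$) described just before the corollary: the largest such set containing $\mathbf{u}_0$ and lying inside the graph closure of $\mathbb{B}_-\cap D(\mathbf{A})$, so that $\mathbb{D}$ is contained in the annulus $u_-\le\|\mathbf{u}\|_{\mathbb{H},1}\le\tfrac{1}{2}u_+$. By Theorem 10 the operator $\mathcal{A}(\cdot,t)$ is closed, strongly dissipative (hence m-dissipative, so a generator of a contraction semigroup on $\mathbb{D}$ for each $t$) and jointly continuous in $(\mathbf{u},t)$, and Lemma 8 gives its H\"{o}lder continuity in $t$. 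The hypotheses of Theorem 3 therefore hold, and for every $\mathbf{u}_0\in\mathbb{D}$ the abstract problem $\partial_t\mathbf{u}=\mathcal{A}(\mathbf{u},t)$, $\mathbf{u}(0)=\mathbf{u}_0$, has a unique solution $\mathbf{u}\in\mathbb{C}^1[(0,\infty);\mathbb{D}]$; recovering the pressure via the inverse of the Leray projection turns this into a strong solution of the Navier-Stokes system (11) with $\mathbf{u}(0)=\mathbf{u}_0$.

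The asserted regularity is then immediate, exactly as in Theorem 11. Since $\mathbb{D}\subset D(\mathbf{A})\subset\mathbb{H}^2$ we have $\mathbf{u}(t)\in\mathbb{H}^2\subset\mathbb{V}$ for every $t$, and because $\mathbb{D}$ is closed in the graph norm while $t\mapsto\mathbf{u}(t)$ is continuous into $\mathbb{D}$, the image of any compact interval $[0,T]$ is bounded in the graph norm, so $\sup_{0\le t\le T}\|\mathbf{u}(t)\|_{\mathbb{H}^2}<\infty$. Hence $\mathbf{u}\in L^2_{\text{loc}}[[0,\infty);\mathbb{H}^2]$ and $\mathbf{u}\in L^\infty_{\text{loc}}[[0,\infty);\mathbb{V}]$, while $\mathbf{u}\in\mathbb{C}^1[(0,\infty);\mathbb{H}]$ is already part of the conclusion of Theorem 3. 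In particular $\mathbf{u}$, being a strong solution with this regularity, is also a weak solution of (11) with $\mathbf{u}_0\in\mathbb{V}$ and $\mathbf{f}\in L^\infty[(0,\infty);\mathbb{H}]$.

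It remains to see that no other weak solution shares this data, since Theorem 3 only asserts uniqueness within $\mathbb{C}^1[(0,\infty);\mathbb{D}]$. Here I would argue by continuation. By the result quoted from Sell and You (p.\ 396), with datum in $\mathbb{V}$ and forcing in $L^\infty[(0,\infty);\mathbb{H}]$ the weak solution is uniquely determined on every subinterval of some $[0,T_0)$ with $T_0>0$; since our strong solution is a weak solution, it agrees with every weak solution $\mathbf{v}$ of the same problem on $[0,T_0)$. Let $\tau$ be the supremum of the times up to which $\mathbf{v}=\mathbf{u}$, so $\tau\ge T_0>0$. If $\tau$ were strictly below the existence time of $\mathbf{v}$, weak continuity of $\mathbf{v}$ in $\mathbb{H}$ together with strong continuity of $\mathbf{u}$ would force $\mathbf{v}(\tau)=\mathbf{u}(\tau)\in\mathbb{D}\subset\mathbb{V}$, and re-applying the local uniqueness statement from time $\tau$ (the shifted forcing $\mathbf{f}(\cdot+\tau)$ is again in $L^\infty$) would extend the agreement past $\tau$, a contradiction. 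Hence $\mathbf{v}=\mathbf{u}$ throughout, so $\mathbf{u}$ is the unique weak solution.

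I expect the continuation step to be the only genuine obstacle. It rests on the fact---supplied by Theorem 3---that the solution never leaves $\mathbb{D}$, hence never leaves $\mathbb{V}$, so a positive local existence-uniqueness time is available at every restart point; and on knowing that the convex set $\mathbb{D}$ in the corollary is non-empty, which is automatic once $u_-<\tfrac{1}{2}u_+$ and $\mathbf{u}_0$ is an interior point of the annulus but ought to be stated explicitly. Everything else is a transcription of the $\mathbf{f}=\mathbf{0}$ argument of Theorem 11.
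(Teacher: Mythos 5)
Your proposal follows exactly the route the paper intends: the paper offers no explicit proof of this corollary beyond the sentence in the Discussion quoting the Sell--You local weak-uniqueness result, so the argument is Theorem 3 applied to $\mathcal{A}(\cdot,t)$ on the set $\mathbb{D}$ built from $\mathbb{B}_-\cap D(\mathbf{A})$, plus that quotation. Your explicit continuation argument promoting local weak uniqueness to uniqueness on all of $[0,\infty)$ is a detail the paper leaves entirely implicit, and supplying it (together with the observation that the solution never leaves $\mathbb{D}\subset\mathbb{V}$, so a restart is always possible) is the right way to close that gap.
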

As in \cite{GZ}, our results show that the Leray-Hopf weak solutions do not develop singularities if ${\mathbf{u}}_0 ({\mathbf{x}})\in {\mathbb {H}^2}$ (see Giga \cite{G} and references therein).  

We should note that the constant $\al$ in Lemma 7 depends on $r$ so we can't change $r$ without affecting $\al$.  This means that the size of $u_+$ need not increase with large values of $r$.

A close review of the results of this paper show that all theorems hold for the bounded domain case.  This provides an improvement of the results in \cite{GZ}. Furthermore, in that case, we can take $\al = \la_1$, the first eigenvalue of $\bf A$, which is independent of $r$.  This means that choosing larger values for $r$ could increase the possible size of $\D$ for bounded domains. However, the inequality for $\nu$ must be maintained, so that increasing $\D$ is not certain.

\end{document}